\newtheorem{theorem}{Theorem}[section]
\newtheorem{lemma}{Lemma}[section]
\newtheorem{corollary}{Corollary}[section]
\newtheorem{remark}{Remark}[section]
\newcommand{\RR}{\mathbb{R}}
\newcommand{\bd}{\mathbf{d}}
\newcommand{\bx}{\mathbf{x}}
\newcommand{\by}{\mathbf{y}}
\newcommand{\bu}{\mathbf{u}}
\newcommand{\bv}{\mathbf{v}}
\newcommand{\br}{\mathbf{r}}
\newcommand{\bone}{\mathbf{1}}
\newcommand{\bkappa}{\mbox{\boldmath$\kappa$}}
\newcommand{\bDelta}{\mbox{\boldmath$\Delta$}}
\definecolor{brilliantrose}{rgb}{1.0, 0.33, 0.64}
\definecolor{amber}{rgb}{1.0, 0.75, 0.0}
\definecolor{amethyst}{rgb}{0.6, 0.4, 0.8}
\definecolor{carrotorange}{rgb}{0.93, 0.57, 0.13}
\definecolor{rosewood}{rgb}{0.4, 0, 0.04}
\definecolor{lincolngreen}{rgb}{0.11, 0.35, 0.02}
\definecolor{dukeblue}{rgb}{0, 0, 0.61}
\title{
Generalized Friendship Paradoxes in Network Science
}
\author{
    \textbf{Desmond J. Higham} \\
    School of Mathematics\\
    University of Edinburgh\\
    Edinburgh EH93FD, UK\\
    \texttt{d.j.higham@ed.ac.uk} 
    \and \textbf{Francesco Hrobat} \\ 
Mathematical Institute\\
University of Oxford\\
Oxford
OX2 6GG\\
\texttt{francesco.hrobat@maths.ox.ac.uk}
    \and
    \textbf{Francesco Tudisco}\\
 School of Mathematics\\
    University of Edinburgh\\
    Edinburgh EH93FD, UK\\
  \texttt{f.tudisco@ed.ac.uk} 
}
\date{May 2023, this version Oct 2024}
\begin{document}

\maketitle

\begin{abstract} 
Generalized friendship paradoxes occur when, on average, our friends have more of some attribute than us.
These paradoxes are relevant to   
many aspects of human interaction, notably in social science and epidemiology. 
Here, we  derive new theoretical results concerning the inevitability of a paradox arising, using a linear algebra perspective.
Following the seminal 1991 work of Scott L.\ Feld,
we consider two
distinct ways to measure and compare averages, which may be regarded as global and local.
For global averaging, we show that a
generalized friendship paradox holds for a 
large family of walk-based centralities, including Katz centrality and total subgraph communicability,
and also 
for 
nonbacktracking eigenvector centrality.
However, we also find counterexamples for
centralities based on walks of even length.
For local averaging we establish a paradox for 
nonbacktracking eigenvector centrality and  
we characterize the cases where the paradox holds with equality for the walk-based case.
Defining loneliness as the reciprocal of the number of friends, we show that for this attribute
the generalized and local friendship paradoxes always hold in reverse. In this sense, we are always more lonely, on average, than our friends. 
We also derive global and local averaging paradoxes for the case where the arithmetic mean is replaced by the geometric mean.
As well as unifying and adding to the literature in this area, we highlight some open questions.
\end{abstract}

\maketitle

\section{Motivation}
\label{sec:mot}

In 1991, 
the sociologist Scott L. Feld 
pointed out a sampling bias phenomenon that applies to any unweighted, undirected network \cite{Feld91}.
In the context of social networks, \emph{on average, our friends have more friends than we do}.
This effect has subsequently become known as the \emph{Friendship Paradox}.
With regard to our online interactions, 
it has been argued that the effect 
distorts our perceptions \cite{Ja19} and contributes to our dissatisfaction \cite{Bollen2017,DDG17}.
On a more positive note, the paradox may be exploited in the design of
strategies to monitor or control the spread of information or disease \cite{BAN22,CF10,CEU21,GMCCF14,Gomez7738,KKF24,KS22,PMG17},
including vaccination programmes. 
More generally, it may be used to 
sample more efficiently from the tail of a degree distribution \cite{NK21}.

In \cite{EJ14} Eom and Jo looked at a \emph{Generalized Friendship Paradox}: given an externally derived attribute, 
\emph{on average, do our neighbours have more of this attribute than we do?} For example, in a scientific collaboration network, do our 
coauthors have a higher citation count than us on average?
Such issues have subsequently been studied in a range of empirical tests, 
leading to concepts such as the 
activity and virality paradoxes \cite{Hodas13icwsm},
the H-index paradox \cite{BLA16}, 
the 
happiness paradox \cite{PMG17}
and 
the enmity paradox \cite{GC23}.
A further direction of research focuses attention on specific classes of networks, as in 
\cite{CKN21,EK21,JLY21,JLY22,SMG21}.

The original work of Feld 
\cite{Feld91} set up two distinct types of averaging process.
Following 
\cite{GC23} we will distinguish between them by using the 
terms global and local.
(We note that the terms ``ego'' and ``alter'' are used in 
\cite{KKF24}.)
Among the empirical tests for paradox effects across various application domains, some emphasize local averaging 
\cite{BLA16,CALS17,Bollen2017,CKN21,Kooti14icwsm,MR16,YAWC22,ZLZHY20} 
and others global averaging 
\cite{Gr14,Hodas13icwsm,NKNK24}.
In this work we contribute to the understanding
of both types of generalized friendship paradox by highlighting 
new cases where a paradox is guaranteed to hold for any network.
We thereby add to the 
theoretical inevitability results on global averaging in 
\cite{Feld91,H19} and on local averaging in 
\cite{CKN21,EJ14,HV25}.
Our main contributions are:
\begin{itemize}
   \item Theorem~\ref{thm:globfA} and Corollary~\ref{cor:fA} show that a global 
     averaging  paradox holds for a wide class of walk-based centrality measures. 
\item Theorems~\ref{thm:globnbt} and 
\ref{thm:locnbt}
show that both global and local  
     averaging  paradoxes hold for a nonbacktracking version of eigenvector centrality.  
   \item Theorems~\ref{thm:revgfp} and 
    \ref{thm:revlfp} show that both global and local averaging  paradoxes hold for loneliness (defined as the reciprocal of degree). 
   \item Theorems~\ref{thm:globgmfp} and 
    \ref{thm:locgmfp}
   show that global and local geometric mean paradoxes hold for degree. 
   \item Theorem~\ref{thm:even} characterizes  when equality holds in the local averaging  paradox for walk centrality.
   \item Counterexamples in subsection~\ref{subsec:counterggfi} show that a paradox does not hold with
    global averaging for even length walk centrality.
   \end{itemize}

   Section~\ref{sec:def} sets up the key inequalities and describes known results.
   In section~\ref{sec:ggfp}--\ref{sec:geom} we derive new paradoxes, and 
   in 
   section~\ref{sec:technical} we deal with some extra technicalities. The results of 
   computational experiments on real and synthetic networks are presented in 
   section~\ref{sec:exp}.  In 
    section~\ref{sec:summary} we
   summarize the current state of the art and highlight some open questions.

In deriving results, to minimize confusion we will use the phrase ``inequality'' for expressions that arise when we compare related quantities and reserve the phrase ``paradox'' for a result
showing that the inequality holds for all networks.

\section{Definitions and Existing Results}
\label{sec:def}
We consider undirected, unweighted, connected networks with $n$ nodes and no self-loops.
We let $A \in \RR^{n \times n}$ denote the adjacency matrix. 
So $A$ is a symmetric binary matrix with zero diagonal such that 
$a_{ij} = 1$ if nodes $i$ and $j$ are connected and 
$a_{ij} = 0$ otherwise.

We define the degree vector $\bd \in \RR^{n}$ by 
\[
\bd  = A \bone,
\]
where 
$
\bone \in \RR^{n}
$
is the vector of ones. So $d_i$ is the degree of node $i$.
In a social network context, $d_i$ measures the number of friends that node $i$ possesses.
We use $\bd^{-1}$ to denote the vector 
with $i$th element $1/d_i$; so 
$\bd^{-1}$ contains the reciprocal degrees.
We use $\bd^{-T}$ to denote the transpose of 
$\bd^{-1}$. Hence, when $\bx \in \RR^{n}$ we have 
 $\bd^{-T}  \bx = \sum_{i=1}^{n} x_i/d_i$. 
We let $D \in \RR^{n \times n}$ denote the 
diagonal degree matrix, for which $D_{ii} = d_i$. 

\subsection{Global Averages} \label{subsec:glob}
By definition, the (arithmetic) average degree, or average number of friends, taken over all nodes is given by 
$\bone^T \bd/n$.
Suppose instead that we consider all nodes and compute the average number of friends-of-friends. In this computation,
every node appears as a friend $d_i$ times, and each time contributes its $d_i$ friends to the sum, giving a total of $\bd^T \bd$. The number of terms in this sum is 
$\bone^T \bd$, since every edge contributes twice.
  It follows that the difference between the
  friend-of-friend average and the 
  friend average 
   is 
  \begin{equation}
  \frac{\bd^T \bd}{ \bone^T \bd} - \frac{\bone^T \bd}{n}.
\label{eq:fp1}
      \end{equation}
      Feld showed the following result, which may be proved 
      by applying standard arguments in statistics \cite{Feld91,St12} or linear algebra \cite{H19}.
      \begin{theorem}[Global friendship paradox \cite{Feld91}]\label{thm:FPglobal}
       The inequality 
   \begin{equation}
  \frac{\bd^T \bd}{ \bone^T \bd} - \frac{\bone^T \bd}{n} \ge 0 
\label{eq:fpglobal}
      \end{equation} 
      always holds, with equality if and only if the network is regular.
      \end{theorem}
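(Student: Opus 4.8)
The plan is to clear denominators and reduce the claimed inequality to a single elementary statement about sums of squares. Write $m = \bone^T\bd = \sum_{i=1}^{n} d_i$ for the total degree, which is strictly positive because a connected network on $n \ge 2$ nodes has at least one edge (so $m \ge 2$). Since $m>0$ and $n>0$, multiplying \eqref{eq:fpglobal} through by $mn$ shows that it is equivalent to
\begin{equation*}
n\,\bd^T\bd \;\ge\; (\bone^T\bd)^2 .
\end{equation*}
So the whole theorem is really the task of proving this reformulation together with its equality case.

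Now I would finish in either of two equivalent ways. The linear-algebra route: apply the Cauchy--Schwarz inequality to the vectors $\bd$ and $\bone$, giving $(\bone^T\bd)^2 \le (\bone^T\bone)(\bd^T\bd) = n\,\bd^T\bd$, which is exactly the displayed inequality. The statistical route, matching Feld's original derivation, divides \eqref{eq:fp1} by suitable powers of $n$ to recognize it as
\begin{equation*}
\frac{\bd^T\bd/n}{\bone^T\bd/n} - \frac{\bone^T\bd}{n} \;=\; \frac{\overline{d^2} - \overline{d}^{\,2}}{\overline{d}} \;=\; \frac{\operatorname{Var}(d)}{\overline{d}} \;\ge\; 0,
\end{equation*}
where $\overline{d} = m/n$ and $\overline{d^2} = \bd^T\bd/n$ are the first and second moments of the degree under the uniform distribution over nodes, and the last inequality is nonnegativity of a variance. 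For the equality characterization I would invoke the equality case of Cauchy--Schwarz (equivalently, $\operatorname{Var}(d)=0$): equality forces $\bd$ to be a scalar multiple of $\bone$, i.e. all $d_i$ equal, which is precisely the definition of a regular network; conversely a regular network clearly gives equality.

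There is no genuine obstacle here. The only points requiring a moment's care are (i) justifying $\bone^T\bd>0$ so that clearing denominators is legitimate, handled by the connectedness hypothesis, and (ii) stating the equality case cleanly. The real content is the observation that, once denominators are cleared, the statement collapses to Cauchy--Schwarz (or to the nonnegativity of the degree variance); after that reformulation the proof is a one-liner, which is exactly why this result admits both the statistical argument of \cite{Feld91,St12} and the linear-algebraic argument of \cite{H19}.
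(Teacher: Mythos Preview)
Your proposal is correct and matches the paper exactly: the paper does not give its own detailed proof of this theorem but simply notes that it ``may be proved by applying standard arguments in statistics \cite{Feld91,St12} or linear algebra \cite{H19},'' and you have supplied precisely those two arguments (variance nonnegativity and Cauchy--Schwarz) together with the equality case.
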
  
      Theorem~\ref{thm:FPglobal} may interpreted as saying that, 
      \emph{on average
      our friends have at least as many friends as us, with equality
      occuring only if we all have the same number of friends}.
      Because this inequality deals with a node-wide average, and to distinguish 
      it from (\ref{eq:fplocal}) below, we will refer to (\ref{eq:fpglobal}) as the 
      \emph{global friendship inequality}.

Eom and Jo \cite{EJ14} considered the case where each node $i$ in the graph has an associated
attribute, $x_i > 0$. 
An appropriate generalisation of the inequality 
(\ref{eq:fpglobal})
is then 
   \begin{equation}
  \frac{\bd^T \bx}{ \bone^T \bd} - \frac{\bone^T \bx}{n} \ge 0. 
\label{eq:genfpglobal}
      \end{equation} 
     We will refer to (\ref{eq:genfpglobal}) as a 
     \emph{generalized global friendship inequality}; if this holds then, on average, our neighbours have more of the attribute $\bx$ than we do.
      We emphasize that in the general case where the attribute vector $\bx$ is  independent of the network structure, there is no way of knowing a priori 
      whether the inequality (\ref{eq:genfpglobal}) will hold.
      However, there is a simple and intuitive characterisation:
      Eom and Jo \cite{EJ14} showed that the inequality is equivalent to the     
     statement that $\bx$ is nonnegatively correlated with the degree vector.
  (We say that the vectors $\bu$ and $\bv$
  in $\RR^{n}$ 
  are 
  nonnegatively
correlated if 
$\sum_{i=1}^{n}
\sum_{j=1}^{n} (u_i - u_j) (v_i - v_j) \ge 0
$,
and positively 
correlated
if the inequality holds strictly.)

    The work of Higham \cite{H19} was the first to consider the case where the attribute $\bx$ is a network centrality measure (other than degree), and hence is directly dependent on the network structure.
  According to classical 
  \emph{eigenvector  
  centrality} 
  \cite{B1,B2,Vigna}, the importance of  node $i$ can be measured by $x_i$, where 
  $\bx$ is the Perron-Frobenius eigenvector of $A$.
  Here, 
  $\bx$ has positive elements,  
  $A \bx$ is proportional to 
    $\bx$, and we
   assume without loss of generality the normalization 
  $\bone^T \bx = 1$.
  We then have the following result.
   \begin{theorem}[Global eigenvector centrality paradox \cite{H19}]\label{thm:eigFPglobal} 
       The generalized global friendship inequality (\ref{eq:genfpglobal}) holds for eigenvector centrality, 
       with equality      
      if and only if the network is regular. 
      \end{theorem}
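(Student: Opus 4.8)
The plan is to reduce the claimed inequality to the classical fact that the Perron--Frobenius eigenvalue of $A$ is at least the average degree, and then to extract the equality case from the equality case in the Rayleigh-quotient bound. First I would use the eigenvector equation together with the symmetry of $A$ to evaluate the numerator: since $\bd = A\bone$ and $A = A^T$, we have $\bd^T \bx = \bone^T A \bx = \lambda\,\bone^T\bx = \lambda$, where $\lambda$ is the Perron--Frobenius eigenvalue and we used the normalization $\bone^T\bx = 1$. Likewise $\bone^T\bd = \bone^T A \bone = 2m$, twice the number of edges. Substituting into \eqref{eq:genfpglobal}, the generalized global friendship inequality for eigenvector centrality is seen to be exactly the statement
\begin{equation*}
\lambda \;\ge\; \frac{2m}{n} \;=\; \frac{\bone^T\bd}{n},
\end{equation*}
i.e.\ the largest eigenvalue of $A$ dominates the mean degree.

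Next I would establish this bound via the Rayleigh quotient. Because $A$ is symmetric, $\lambda = \max_{\bv \neq \mathbf{0}} \bv^T A \bv / \bv^T \bv$, and taking the test vector $\bv = \bone$ gives $\lambda \ge \bone^T A \bone / \bone^T\bone = 2m/n$, which is precisely the inequality required. This single line proves the inequality part; no iteration or inductive argument is needed.

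Finally, the equality analysis. Equality in the Rayleigh bound holds if and only if $\bone$ is a maximizer, that is, $A\bone = \lambda\bone$. Since the network is connected, $A$ is irreducible, so by Perron--Frobenius the eigenspace for $\lambda$ is one-dimensional and spanned by a positive vector; hence $A\bone = \lambda\bone$ forces $\bd = A\bone = \lambda\bone$, meaning all degrees are equal and the network is regular. Conversely, for an $r$-regular network one has $\bx = \bone/n$ and $\bd = r\bone$, so both terms in \eqref{eq:genfpglobal} equal $r/n$ and equality holds. I do not anticipate a real obstacle in this argument; the only point needing care is the equality case, where irreducibility/Perron--Frobenius must be invoked to pass from ``$\bone$ is \emph{some} eigenvector of $A$'' to ``$\bd$ is a scalar multiple of $\bone$,'' and hence to regularity.
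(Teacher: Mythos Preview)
Your argument is correct. The paper does not actually supply its own proof of this theorem; it is quoted from \cite{H19} as background, so there is no in-paper proof to compare against. Your reduction of \eqref{eq:genfpglobal} to the spectral inequality $\lambda \ge \bone^T\bd/n$ via $\bd^T\bx = \bone^T A\bx = \lambda$ is exactly the standard route, and the Rayleigh-quotient step with test vector $\bone$ finishes it cleanly.

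Two small remarks. First, in the equality case you do not need the Perron--Frobenius simplicity of the top eigenspace: for a symmetric matrix, equality in $\lambda \ge \bone^T A\bone/\bone^T\bone$ already forces $\bone$ to be an eigenvector for $\lambda$, and then $\bd = A\bone = \lambda\bone$ gives regularity directly. Second, in your converse check for an $r$-regular graph, the two sides of \eqref{eq:genfpglobal} both equal $1/n$, not $r/n$ (since $\bd^T\bx/\bone^T\bd = r/(rn)$ and $\bone^T\bx/n = 1/n$); the conclusion that they coincide is of course unaffected.
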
  
      Theorem~\ref{thm:eigFPglobal} may be interpreted as saying that, 
      \emph{on average, our friends are at least as important as us, with equality
      occurring only if we all have the same number of friends}. Results of this type add further weight to the arguments proposed in the  social sciences that friendship paradox effects contribute to the perceived levels of dissatisfaction among those who engage in 
      online interaction. They also support the empirical findings that the paradox can be exploited for monitoring or controlling the spread of information or disease.

  \subsection{Local Averages} \label{subsec:loc} 
  The original article \cite{Feld91} mentioned an alternative, more localized, 
  way to compare a node with its neighbours.
  For node $i$, we may 
   record how the nodal degree $d_i$ compares with the average 
    degree of its neighbours by computing 
  \[
     \Delta_i =  
     \frac{1}{d_i}
  \sum_{j=1}^{n}
    a_{ij} d_j  - d_i. 
    \]
    After making these local comparisons, we may then move to a graph-wise
    summary by
     asking whether the arithmetic average of the $\Delta_i$ values is nonnegative.
     Since $\bDelta = D^{-1} A \bd - \bd$, we may write this inequality as     
     \begin{equation}
      \bone^T ( D^{-1} A \bd - \bd ) \ge 0,
     \label{eq:fplocal}
      \end{equation}
      giving another sense in which, on average,
      our friends might have at least as many friends as us.
      We will refer to (\ref{eq:fplocal}) as 
      the \emph{local friendship inequality}.  
      In \cite{Feld91} Feld defined this concept and reported 
       results for specific networks; however, unlike in the 
       global case (\ref{eq:fpglobal}), he did not study it in general.
       Thirty years later, Cantwell et al.\ \cite{CKN21} proved the analogue of 
       Theorem~\ref{thm:FPglobal}:
        \begin{theorem}[Local friendship paradox \cite{CKN21}]\label{thm:FPlocal} 
       The inequality (\ref{eq:fplocal})  
      always holds, with equality if and only if the network is regular.
      \end{theorem}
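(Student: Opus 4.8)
The plan is to rewrite both pieces of the left-hand side of (\ref{eq:fplocal}) in coordinates as sums over ordered pairs of adjacent nodes, and then exploit the symmetry $a_{ij}=a_{ji}$ to recombine them into a manifestly nonnegative sum over edges.

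First I would expand
\[
\bone^T D^{-1} A \bd = \sum_{i=1}^{n} \frac{1}{d_i} \sum_{j=1}^{n} a_{ij} d_j = \sum_{i,j} a_{ij}\,\frac{d_j}{d_i},
\]
which makes sense because connectedness (with $n\ge 2$ and no self-loops) gives $d_i \ge 1$ for every $i$, so $D$ is invertible. Next, since $d_i = \sum_j a_{ij}$, we have $\bone^T \bd = \sum_i d_i = \sum_{i,j} a_{ij}$. Subtracting, the quantity in (\ref{eq:fplocal}) becomes
\[
\bone^T \bigl( D^{-1} A \bd - \bd \bigr) = \sum_{i,j} a_{ij}\left(\frac{d_j}{d_i} - 1\right).
\]

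The key step is to pair the term indexed by $(i,j)$ with the term indexed by $(j,i)$. Because $a_{ij} = a_{ji}$ and the diagonal vanishes, the combined contribution of each unordered pair is $a_{ij}\bigl(d_j/d_i + d_i/d_j - 2\bigr) = a_{ij}\,(d_i - d_j)^2/(d_i d_j)$, so
\[
\bone^T \bigl( D^{-1} A \bd - \bd \bigr) = \sum_{i<j} a_{ij}\,\frac{(d_i - d_j)^2}{d_i d_j} \ge 0,
\]
which proves the inequality. (An alternative route to the same bound is a term-by-term application of the scalar estimate $t + t^{-1} \ge 2$ with $t = d_j/d_i$, but the symmetrization above also delivers the equality case cleanly.)

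For the equality statement: the displayed sum is zero precisely when $d_i = d_j$ for every edge $\{i,j\}$; since the network is connected, this forces all degrees to coincide, i.e.\ the network is regular, and conversely on a regular network $D^{-1}A\bd = \bd$ so equality holds trivially. I do not expect any genuine obstacle here — the argument is essentially a one-line symmetrization once the two sums are written out. The only point requiring care is this last deduction, where connectedness is essential: a disconnected union of regular components with differing degrees would satisfy the equality without being globally regular.
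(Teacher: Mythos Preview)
Your argument is correct and is essentially the same symmetrization that the paper (following \cite{CKN21}) employs: the paper does not reprove Theorem~\ref{thm:FPlocal} explicitly, but its proof of the closely related Theorem~\ref{thm:revlfp} uses exactly your device of writing both terms as double sums over $a_{ij}$ and pairing $(i,j)$ with $(j,i)$ to obtain a sum of squares. Your treatment of the equality case, invoking connectedness to propagate $d_i=d_j$ across all nodes, is also the standard one.
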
  
       
   Cantwell  et al.\ \cite{CKN21} also introduced the generalized version of  
   (\ref{eq:fplocal}) where degree is replaced by an attribute vector $\bx$.
   Here, instead of the global version (\ref{eq:genfpglobal}), we have 
    \begin{equation}
      \bone^T ( D^{-1} A \bx - \bx ) \ge 0,
     \label{eq:genfplocal}
      \end{equation}
   which we refer to as a 
     \emph{generalized local friendship inequality}.
     The authors in \cite{CKN21} showed that (\ref{eq:genfplocal}) is equivalent to the 
     statement that $\bx$ correlates 
     nonnegatively with $\bkappa$, where 
     \begin{equation}
         \kappa_i = \sum_{j=1}^{n} \frac{a_{ij}}{d_j}.
         \label{eq:kappa}
         \end{equation}

     As indicated in the theorem below, 
   Hazra and Verbitskiy \cite{HV25} showed that the generalized local friendship inequality holds for eigenvector centrality and also for 
   \emph{walk centrality}, defined as
    \begin{equation}
       \bx = A^{\ell} \bone, \quad \text{~where~}
       \ell >1 \text{~is~an~integer}, 
       \label{eq:walk}
    \end{equation} 
    and for \emph{Katz centrality} \cite{Ka}, defined as
   \begin{equation}
   (I - \alpha) \bx = \bone,
   \label{eq:Katz}
   \end{equation}
   where $0  < \alpha < 1/\rho(A)$, with 
   $\rho(A)$ denoting the spectral radius of $A$.

   \begin{theorem}[Generalized local friendship paradox for eigenvector, walk and Katz centralities \cite{HV25}] \label{thm:HV25}
   The generalized local friendship inequality 
   (\ref{eq:genfplocal}) holds for eigenvector, walk and Katz centrality.
  \end{theorem}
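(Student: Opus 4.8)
The plan is to reduce all three cases to a single inequality about powers of $A$ and then use linearity. Write $\phi(\bx)=\bone^T(D^{-1}A\bx-\bx)$ for the left-hand side of (\ref{eq:genfplocal}); this is a linear functional of $\bx$. The core step is the claim that $\phi(A^\ell\bone)\ge 0$ for every integer $\ell\ge 1$. To prove it I would expand both terms entrywise: since $A\bone=\bd$,
\[
\bone^T D^{-1}A(A^\ell\bone)=\bone^T D^{-1}A^\ell\bd=\sum_{i,j}\frac{d_j}{d_i}\,(A^\ell)_{ij},\qquad
\bone^T A^\ell\bone=\sum_{i,j}(A^\ell)_{ij},
\]
so $\phi(A^\ell\bone)=\sum_{i,j}\bigl(\tfrac{d_j}{d_i}-1\bigr)(A^\ell)_{ij}$. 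Because $A^\ell$ is symmetric and entrywise nonnegative, symmetrizing the summand under $i\leftrightarrow j$ yields
\[
\phi(A^\ell\bone)=\tfrac12\sum_{i,j}\Bigl(\tfrac{d_j}{d_i}+\tfrac{d_i}{d_j}-2\Bigr)(A^\ell)_{ij}=\tfrac12\sum_{i,j}\frac{(d_i-d_j)^2}{d_id_j}\,(A^\ell)_{ij}\ \ge\ 0 .
\]
This settles walk centrality $\bx=A^\ell\bone$ immediately (and for $\ell=1$ recovers Theorem~\ref{thm:FPlocal}).

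For Katz centrality I would use the Neumann expansion $\bx=(I-\alpha A)^{-1}\bone=\sum_{k\ge 0}\alpha^k A^k\bone$, which converges since $0<\alpha<1/\rho(A)$. By linearity, $\phi(\bx)=\sum_{k\ge 0}\alpha^k\phi(A^k\bone)$; the $k=0$ term vanishes because $\phi(\bone)=\bone^T D^{-1}\bd-\bone^T\bone=0$, and every $k\ge 1$ term is nonnegative by the core step, so $\phi(\bx)\ge 0$. Since $\alpha>0$ and the $k=1$ term equals $\tfrac12\sum_{i,j}\frac{(d_i-d_j)^2}{d_id_j}a_{ij}$, which is strictly positive unless every pair of adjacent nodes has equal degree, connectedness then forces the network to be regular, giving the equality characterization in this case.

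The eigenvector case I would handle directly. From $A\bx=\rho(A)\bx$ we get $\bone^T D^{-1}A\bx=\rho(A)\,\bd^{-T}\bx$, and testing the eigenvalue equation against $\bone$ (using $\bone^T A=\bd^T$) gives $\bd^T\bx=\rho(A)\,\bone^T\bx$. Hence it suffices to show $\rho(A)\,\bd^{-T}\bx\ge\bone^T\bx$. Writing $x_i=\sqrt{x_i/d_i}\cdot\sqrt{d_i x_i}$ and applying Cauchy--Schwarz gives $(\bone^T\bx)^2\le(\bd^{-T}\bx)(\bd^T\bx)=\rho(A)(\bd^{-T}\bx)(\bone^T\bx)$, that is $\bone^T\bx\le\rho(A)\,\bd^{-T}\bx$, as required; equality in Cauchy--Schwarz forces $d_i^2$ to be constant, i.e.\ the network is regular. (One could alternatively obtain this from the Katz case by letting $\alpha\uparrow 1/\rho(A)$, but the direct argument avoids having to worry about bipartite graphs, where power iteration does not converge.)

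The only genuine insight in the argument is the bookkeeping in the core step: rewriting $\bone^T D^{-1}A^{\ell+1}\bone$ as the bilinear form $\sum_{i,j}(d_j/d_i)(A^\ell)_{ij}$ and then symmetrizing against the nonnegative symmetric matrix $A^\ell$, after which the elementary bound $t+t^{-1}\ge 2$ does all the work. I expect the residual friction to lie entirely in the equality analysis for bare walk centrality: equality holds iff $(d_i-d_j)^2(A^\ell)_{ij}=0$ for all $i,j$, and for even $\ell$ on a bipartite graph this can occur without regularity (for instance the path on three nodes with $\ell=2$), so the clean "equality iff regular" statement needs a mild reachability caveat there, a caveat that is automatically removed for Katz centrality (via its $k=1$ component) and for eigenvector centrality (via the Cauchy--Schwarz equality condition).
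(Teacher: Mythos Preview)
The paper does not supply its own proof of this theorem; it is quoted from \cite{HV25}. So there is nothing in the paper to compare your argument against line by line.

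That said, your proof is correct and very much in the spirit of the surrounding material. The symmetrisation of $\sum_{i,j}(d_j/d_i-1)(A^\ell)_{ij}$ against the nonnegative symmetric matrix $A^\ell$ is exactly the device used in the paper's proof of Theorem~\ref{thm:revlfp} (and in \cite{CKN21} for Theorem~\ref{thm:FPlocal}), so your core step would slot naturally into the paper. The Neumann-series reduction for Katz and the Cauchy--Schwarz argument for the Perron eigenvector are both standard and clean.

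Your caveat about the equality case for walk centrality is not residual friction but a genuine correction to the theorem as stated. Your $P_3$ example with $\ell=2$ shows that ``equality iff regular'' is simply false for walk centrality with even $\ell$. The paper is evidently aware of this: the sentence immediately following the theorem says only that ``the analysis in \cite{HV25} establishes equality if and only if the network is regular in the cases of eigenvector and Katz centrality,'' pointedly omitting walk centrality, and Corollary~\ref{cor:fA} restricts the global analogue to odd $\ell$. So you have correctly located an imprecision in the statement itself, and your explanation (bipartiteness kills the relevant entries of $A^\ell$ for even $\ell$) is the right diagnosis.
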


 The authors in \cite{HV25} 
 also discuss conditions under which equality holds 
 in 
 (\ref{eq:genfplocal}).
In the case of eigenvector and Katz 
 centrality, they prove that 
 equality holds if and only if the graph is regular.
 For walk centrality, they point out that 
 equality always holds in the degenerate case where
 $\ell = 0$ in (\ref{eq:walk}), so $\bx = \bone$.
 However, equality is not discussed for $\ell \ge 1$.
 We return to this issue in  section~\ref{sec:technical}, where we show in Theorem~\ref{thm:even} that for
 odd $\ell$ equality holds if and only if the graph is regular and 
 for 
 even $\ell$ 
  equality holds if and only if the graph is either regular or biregular.

In the following sections we
demonstrate that further global and local paradoxes may be established.

\begin{remark}\label{rem:scaling}
    We note that the inequalities
    (\ref{eq:genfpglobal}) and (\ref{eq:genfplocal}) have two natural properties. First, if 
    $\bx$ satisfies an inequality
    then so does $\beta \bx$ for any $\beta > 0$.
    Second, if $\bx$ and $\by$ satisfy an inequality, then so does $\bx + \by$.
\end{remark}

    \section{
      Generalized Global Friendship Paradoxes}
    \label{sec:ggfp}

    In this section we 
     focus on the generalized global friendship inequality (\ref{eq:genfpglobal}) and derive new results
     concerning centrality measures. 

      \subsection{Generalized Global Friendship Paradox for 
      Katz-style Centralities}\label{subsec:ggfpkatz}

      We begin with a wide class of matrix-function based centralities; see, for example, \cite{EHSiamRev} for further motivation.

     \begin{theorem}[Generalized global friendship paradox for $f(A)\bone$] \label{thm:globfA}
 The generalized global friendship inequality (\ref{eq:genfpglobal}) holds for 
$\bx = f(A)\bone$, where $f$ is an increasing function defined over the spectrum of $A$.
Furthermore, if $f$ is strictly increasing then equality holds if and only if the associated graph is regular.
  \end{theorem}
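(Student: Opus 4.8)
The plan is to diagonalise $A$ and reduce the generalized global friendship inequality (\ref{eq:genfpglobal}) to the elementary fact that a monotone function of a random variable is positively correlated with it. Since $A$ is symmetric, write $A=\sum_{k=1}^{n}\lambda_k\bv_k\bv_k^{T}$ with eigenvalues $\lambda_1\ge\cdots\ge\lambda_n$ and an orthonormal eigenbasis $\bv_1,\dots,\bv_n$, and set $c_k=\bone^{T}\bv_k$, so that $\sum_{k=1}^{n}c_k^{2}=\|\bone\|^{2}=n$ and $c_1>0$ (both the Perron eigenvector $\bv_1$ and $\bone$ are positive). Using $\bd=A\bone$, $\bx=f(A)\bone$ and $A^{T}=A$, a direct calculation gives
\begin{equation}
\bone^{T}\bd=\sum_{k=1}^{n}\lambda_k c_k^{2},\qquad \bone^{T}\bx=\sum_{k=1}^{n}f(\lambda_k)c_k^{2},\qquad \bd^{T}\bx=\bone^{T}Af(A)\bone=\sum_{k=1}^{n}\lambda_k f(\lambda_k)c_k^{2},
\label{eq:prop-spec}
\end{equation}
and note that $\bone^{T}\bd=\bone^{T}A\bone$, twice the number of edges, is strictly positive.

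Because $\bone^{T}\bd>0$, the inequality (\ref{eq:genfpglobal}) is equivalent to $n\,\bd^{T}\bx\ge(\bone^{T}\bd)(\bone^{T}\bx)$. Substituting (\ref{eq:prop-spec}), dividing by $n^{2}$, and setting $p_k=c_k^{2}/n\ge0$ (so that $\sum_k p_k=1$ and $p_1>0$), this reads
\begin{equation}
\sum_{k=1}^{n}p_k\lambda_k f(\lambda_k)\ \ge\ \Bigl(\sum_{k=1}^{n}p_k\lambda_k\Bigr)\Bigl(\sum_{k=1}^{n}p_k f(\lambda_k)\Bigr).
\label{eq:prop-cov}
\end{equation}
Equivalently, if $\Lambda$ denotes the random variable taking the value $\lambda_k$ with probability $p_k$, then (\ref{eq:prop-cov}) is the statement $\mathbb{E}[\Lambda f(\Lambda)]\ge\mathbb{E}[\Lambda]\,\mathbb{E}[f(\Lambda)]$, i.e.\ $\mathrm{Cov}(\Lambda,f(\Lambda))\ge0$; since $\mathbb{E}[\Lambda]=\bone^{T}\bd/n>0$, dividing back through recovers (\ref{eq:genfpglobal}) in its original quotient form.

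I would then prove (\ref{eq:prop-cov}) by the symmetrisation identity behind Chebyshev's sum inequality: taking $\Lambda'$ to be an independent copy of $\Lambda$,
\begin{equation}
\mathrm{Cov}(\Lambda,f(\Lambda))=\tfrac12\,\mathbb{E}\bigl[(\Lambda-\Lambda')(f(\Lambda)-f(\Lambda'))\bigr]=\tfrac12\sum_{k,l}p_kp_l(\lambda_k-\lambda_l)(f(\lambda_k)-f(\lambda_l))\ \ge\ 0,
\label{eq:prop-symm}
\end{equation}
because $f$ is increasing on the spectrum, so the factors $\lambda_k-\lambda_l$ and $f(\lambda_k)-f(\lambda_l)$ always have the same sign. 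This establishes (\ref{eq:genfpglobal}). The parenthetical remark in the statement is the special case in which the power series yields an $f$ that is increasing on $[\lambda_n,\lambda_1]$, which includes $f(t)=e^{t}$ (total subgraph communicability) and $f(t)=(1-\alpha t)^{-1}$ with $0<\alpha<1/\rho(A)$ (Katz centrality, cf.\ (\ref{eq:Katz})), both strictly increasing on the spectrum; a fully general power-series statement that also allowed non-monotone $f$ such as $t\mapsto\cosh t$ could not be obtained from (\ref{eq:prop-symm}) alone, and I would instead handle it by expanding $\bx=\sum_j a_j A^j\bone$ and proving $n\,\bone^{T}A^{j+1}\bone\ge(\bone^{T}A\bone)(\bone^{T}A^{j}\bone)$ for each $j$ using the nonnegativity of walk counts. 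That last inequality is the one point I expect to require genuine work.

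Finally, for the equality claim suppose $f$ is strictly increasing. The summands in (\ref{eq:prop-symm}) are then strictly positive whenever $p_k,p_l>0$ and $\lambda_k\neq\lambda_l$, so $\mathrm{Cov}(\Lambda,f(\Lambda))=0$ forces all $\lambda_k$ with $c_k\neq0$ to equal a common value $\mu$; consequently $\bone=\sum_k c_k\bv_k\in\ker(A-\mu I)$, so $\bd=A\bone=\mu\bone$ and the graph is regular. Conversely, on an $r$-regular graph $\bd=r\bone$, whence $\bd^{T}\bx/(\bone^{T}\bd)=r\,\bone^{T}\bx/(rn)=\bone^{T}\bx/n$ and (\ref{eq:genfpglobal}) holds with equality. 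Besides the power-series refinement just mentioned, the only things needing care are the bookkeeping in this equality analysis and the observation that $\bone^{T}\bd>0$, which is what makes clearing the denominator legitimate.
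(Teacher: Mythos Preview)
Your argument is correct and coincides with the paper's: both diagonalise $A$, rewrite the inequality as $n\sum_k w_k\lambda_k f(\lambda_k)\ge(\sum_k w_k\lambda_k)(\sum_k w_k f(\lambda_k))$ with weights $w_k=(\bv_k^{T}\bone)^{2}$ (your $c_k^{2}=np_k$), and establish it via the symmetrisation identity you phrase as Chebyshev's sum inequality / $\mathrm{Cov}(\Lambda,f(\Lambda))\ge0$, which the paper writes out algebraically as its identity (\ref{eq:wident}); the equality analysis is also the same. Your aside that a power series with nonnegative coefficients need not be monotone on the sign-mixed spectrum of $A$ (e.g.\ $\cosh$) is a fair observation the paper does not address---it treats the parenthetical informally, with Corollary~\ref{cor:fA} only invoking genuinely increasing examples---so the termwise alternative you sketch is not needed for what is actually proved.
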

  \begin{proof}
Since $A$ is real and symmetric, it admits an orthonormal eigenbasis $\{\bv_j\}$ with
corresponding real eigenvalues $\lambda_j$.
So we may write (with all sums running from $1$ to $n$)
\begin{equation}
A=\sum_j \lambda_j \bv_j \bv_j^T,\qquad 
\bone =\sum_j (\bv_j^T \bone)\,\bv_j,
\label{eq:sym1}
\end{equation}
and also 
\begin{equation}
\bx=f(A)\bone =\sum_j f(\lambda_j)(\bv_j^T \bone) \bv_j,
\qquad
\bd=A\bone=\sum_j \lambda_j (\bv_j^T \bone)\,\bv_j.
\label{eq:sym2}
\end{equation}
Letting $w_j :=(\bv_j^T \bone)^2\ge0$ it follows that 
\[
\bone^T \bd=\sum_j \lambda_j w_j,\qquad
\bd^T \bx=\sum_j \lambda_j f(\lambda_j)w_j,\qquad
\bone^T \bx=\sum_j f(\lambda_j)w_j.
\]
The inequality (\ref{eq:genfpglobal})
is therefore equivalent to
\begin{equation}
n\sum_j\lambda_jf(\lambda_j) w_j
\;\ge\;
\Big(\sum_j\lambda_j w_j\Big)\Big(\sum_jf(\lambda_j)w_j\Big).
\label{eq:weq}
\end{equation}

Now define the quantity 
\[
\mu:=\frac{1}{n}\sum_j \lambda_j w_j = \frac{\bone^T \bd}{n},
\]
so that the desired inequality (\ref{eq:weq}) may be written 
\begin{equation}\label{eq:rewineq}
\sum_j w_jf(\lambda_j)(\lambda_j-\mu)\ge 0.
\end{equation}

To make progress, we will establish the identity 
\begin{equation}
 \sum_j w_jf(\lambda_j)(\lambda_j-\mu) = \frac{1}{2n}\sum_{i,j}w_iw_j(\lambda_i-\lambda_j)(f(\lambda_i)-f(\lambda_j)).
\label{eq:wident}
\end{equation}

Using the fact that
\[ 
\sum_{i} w_i = \sum_{i} (\bv_i^T \bone)^2 = \| \bone \|_2^2  = n,
\]
we have 
\begin{eqnarray*}
\frac{1}{2n}\sum_{i,j}w_iw_j(\lambda_i-\lambda_j)(f(\lambda_i)-f(\lambda_j))&=& 
\frac{1}{2n}\sum_{i,j} w_i w_j\Big(\lambda_i f(\lambda_i)-\lambda_i f(\lambda_j)-\lambda_j f(\lambda_i)+\lambda_j f(\lambda_j)\big)\\
&=&\frac{1}{2n}\Big(\Big(\sum_jw_j\Big)\Big(\sum_i w_i\lambda_i f(\lambda_i)\Big) + \Big(\sum_iw_i\Big)\Big(\sum_j w_j\lambda_j f(\lambda_j)\Big) \\
&& \mbox{} 
- \sum_{i,j} w_i w_j \lambda_i f(\lambda_j) - \sum_{i,j} w_i w_j \lambda_j f(\lambda_i)\Big)\\
&=&\frac{1}{2n}\Big(2n\sum_i w_i\lambda_i f(\lambda_i)
- 2 \sum_{i,j} w_i w_j \lambda_i f(\lambda_j) \Big),\\
&=& \sum_i w_i\lambda_i f(\lambda_i)
- \frac{1}{n} 
\Big(\sum_i w_i \lambda_i\Big)\Big(\sum_j w_j f(\lambda_j)\Big),\\
&=& \sum_i w_i\lambda_i f(\lambda_i)
-  \mu \sum_j w_j f(\lambda_j),\\
&=& \sum_i w_i f(\lambda_i) (\lambda_j - \mu),
\end{eqnarray*}
as required. 

Because $f$ is increasing we have 
$(\lambda_i - \lambda_j)(f(\lambda_i)-f(\lambda_j)) \ge 0$ for any $i$ and $j$. Summing, we see that the expression on the right hand side of (\ref{eq:wident})
is greater than or equal to zero, and hence (\ref{eq:rewineq}) holds.

If $f$ is strictly increasing then it is clear from the identity 
(\ref{eq:wident}) that 
equality holds 
 and only if
 all $\lambda_i$ are equal
 for eigenvalues whose eigenvector is orthogonal to $\bone$.
  In this case, from (\ref{eq:sym1}) and (\ref{eq:sym2}), we have 
\[
\bd = \sum_{j} \lambda_j(\bv_j^T \bone) \bv_j
= \lambda \sum_{j} (\bv_j^T \bone)  \bv_i
= \lambda \bone,
\]
so the graph is regular.
On the other hand, if the graph is regular, then 
\[ \frac{\bd^T \bx}{\bone^T \bd} = \frac{\lambda \bone^T \bx}{\lambda \bone^T \bone} = \frac{\|\bx\|_1}{n},\]
completing the proof.
  \end{proof}

  Theorem~\ref{thm:globfA} immediately covers 
  walk centrality (\ref{eq:walk}), 
  Katz centrality (\ref{eq:Katz}) and 
   \emph{total subgraph communicability} \cite{BK13},
    also known as 
  \emph{exponential centrality}, which is 
  defined by 
  \begin{equation}
       \bx = \exp(A) \bone.
       \label{eq:total}
    \end{equation}

\begin{corollary}[Examples arising from $f(A)\bone$]
\label{cor:fA}
The generalized global friendship inequality (\ref{eq:genfpglobal}) holds for 
walk centrality (\ref{eq:walk}) when $\ell$ is odd,
Katz centrality (\ref{eq:Katz}) and total subgraph communicability (\ref{eq:total}).
Furthermore, in these cases 
equality holds 
if and only if the associated graph is regular.

\end{corollary}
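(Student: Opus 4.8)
The plan is to reduce everything to Theorem~\ref{thm:globfA} by writing each of the three centrality vectors in the form $\bx = f(A)\bone$ for a suitable strictly increasing function $f$ on the spectrum of $A$; once that is done, the inequality (\ref{eq:genfpglobal}) and the ``equality iff regular'' statement follow verbatim from that theorem, since in each case $f$ will turn out to be \emph{strictly} increasing. The only work is identifying $f$ and checking the monotonicity (and, for Katz, well-definedness) on the set of eigenvalues of $A$, which lies in $[-\rho(A),\rho(A)]$.

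For walk centrality (\ref{eq:walk}) with $\ell$ odd, I would take $f(t)=t^{\ell}$; since $\ell$ is an odd integer this is strictly increasing on all of $\RR$, and $\bx=A^{\ell}\bone=f(A)\bone$ by the spectral calculus. (This is exactly where oddness is used: for even $\ell$ the map $t\mapsto t^{\ell}$ fails to be monotone on the negative part of the spectrum, so the corollary cannot claim it.) For total subgraph communicability (\ref{eq:total}), take $f(t)=e^{t}$, strictly increasing on $\RR$, so that $\bx=\exp(A)\bone=f(A)\bone$. For Katz centrality (\ref{eq:Katz}), take $f(t)=1/(1-\alpha t)$, so that $\bx=(I-\alpha A)^{-1}\bone=f(A)\bone$, provided $f$ is regular on the spectrum.

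The one step that needs a short argument — and the only place a reader might pause — is the Katz case. Because $0<\alpha<1/\rho(A)$, every eigenvalue $\lambda$ of $A$ satisfies $|\alpha\lambda|\le\alpha\rho(A)<1$, hence $1-\alpha\lambda\in(0,2)$; in particular $I-\alpha A$ is invertible and $f$ is finite at every eigenvalue, while $f'(t)=\alpha/(1-\alpha t)^{2}>0$ on the interval $(-\infty,1/\alpha)$, which contains the spectrum, so $f$ is strictly increasing there. With $f$ strictly increasing on the spectrum of $A$ in all three cases, Theorem~\ref{thm:globfA} yields the generalized global friendship inequality (\ref{eq:genfpglobal}) for each, with equality if and only if the graph is regular. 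I do not anticipate any genuine obstacle: the substance of the result is carried entirely by Theorem~\ref{thm:globfA}, and this corollary is just a matter of recognising three standard centralities as instances of $f(A)\bone$ with $f$ strictly monotone.
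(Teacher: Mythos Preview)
Your proposal is correct and is exactly the approach the paper intends: the corollary is stated without a separate proof, being regarded as immediate from Theorem~\ref{thm:globfA} once one recognises walk, Katz and exponential centrality as $f(A)\bone$ with $f(t)=t^{\ell}$ ($\ell$ odd), $f(t)=(1-\alpha t)^{-1}$ and $f(t)=e^{t}$, each strictly increasing on the spectrum. Your explicit verification for the Katz case (that $0<\alpha<1/\rho(A)$ forces $1-\alpha\lambda>0$ on the spectrum, so $f$ is well defined and strictly increasing there) is a detail the paper leaves implicit.
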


Note that the generalized global friendship inequality does \textit{not} hold in general for the centrality measure given by walk centrality when $\ell$ is even. We discuss this issue in section~\ref{sec:technical}.

   \subsection{Generalized Global Friendship Paradox for 
      Nonbacktracking Eigenvector Centrality}\label{subsec:ggfpnb}
      For some networks, the standard eigenvector centrality measure has been observed
to exhibit \emph{localization}---attributing large values to a small subset of very central nodes, and failing to distinguish 
adequately between the remainder \cite{AGHN17a,KMMNSZZ13,MZN14,PC15}.
To remedy this, 
an alternative \emph{nonbacktracking} version of eigenvector
centrality 
was proposed in \cite{MZN14}.

    The nonbacktracking eigenvector centrality vector 
    $ \bv_a \in \RR^{n}$ can be computed via 
the $2n$ by $2n$ eigenvalue problem
\begin{equation}
\left[
\begin{array}{cc}
A & I - D \\
I & 0 
\end{array}
\right]
\left[
\begin{array}{c}
\bv_a \\
\bv_b
\end{array}
\right]
=
\lambda
\left[
\begin{array}{c}
\bv_a \\
\bv_b
\end{array}
\right],
\label{eq:nbteig}
\end{equation}
which is intimately connected with the nonbacktracking, or
Hashimoto, matrix \cite{MZN14,NQ24,ST96}
and also with
 nonbacktracking Katz-style centrality \cite[Theorem~10.2]{GHN18}. 
Generically, the $2n$ by $2n$ matrix in (\ref{eq:nbteig}) has a spectral radius strictly greater than one, with a dominant real-valued 
eigenvalue $\lambda > 1$ for which 
$\bv_a$ and
$\bv_b$ in 
the 
corresponding eigenvector can be chosen to have nonnegative entries.
Following the authors in \cite{MZN14}, we then 
use $\bv_a$ to define the nonbacktracking eigenvector centrality measure, normalized so that 
$\bone^T \bv_a = 1$.
However, for certain graphs the spectral radius is equal to one; this is a case where the dominant eigenvalue may not be not unique.
It is immediate in (\ref{eq:nbteig}) that 
$\lambda = 1$ is an eigenvalue with corresponding eigenvector given by $\bv_a = \bv_b = \bone$. Hence when $\lambda = 1$ 
is a dominant eigenvalue we take the natural choice of $\bx = \bone/n$
to define the centrality measure. In this scenario, 
where the centrality measure is uniform, both versions of the generalized friendship inequality hold with equality. So it is of interest to characterize when $\lambda = 1$ is a dominant eigenvalue.
This motivates the material in 
subsection~\ref{subsec:l1}.
For completeness, 
we also provide arguments in 
subsection~\ref{subsec:nnbtheory}
which confirm that the centrality measure is
always well defined. 

We now show that a generalized global friendship paradox holds for the nonbacktracking eigenvalue centrality measure.

 \begin{theorem}[Generalized global nonbacktracking eigenvector
  centrality paradox] \label{thm:globnbt}
 The generalized global friendship inequality (\ref{eq:genfpglobal}) holds for 
 nonbacktracking eigenvector
  centrality (defined via $\bx = \bv_a$ in (\ref{eq:nbteig})). 
  \end{theorem}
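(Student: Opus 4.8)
The plan is to mimic the spectral bookkeeping used in the proof of Theorem~\ref{thm:globfA}, but applied to the $2n\times 2n$ block pencil in \eqref{eq:nbteig}. Write $M=\left[\begin{smallmatrix}A & I-D\\ I & 0\end{smallmatrix}\right]$ and let $(\lambda,(\bv_a,\bv_b))$ be the dominant eigenpair with $\bv_a,\bv_b\ge 0$ and $\bone^T\bv_a=1$. The two block rows give the relations $A\bv_a+(I-D)\bv_b=\lambda\bv_a$ and $\bv_a=\lambda\bv_b$, so $\bv_b=\bv_a/\lambda$ and, substituting, $A\bv_a+\tfrac1\lambda(I-D)\bv_a=\lambda\bv_a$, i.e. $A\bv_a=\bigl(\lambda-\tfrac1\lambda\bigr)\bv_a+\tfrac1\lambda D\bv_a$. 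Rearranging, $D^{-1}A\bv_a$ and similar quantities can be expressed through $\bv_a$ alone, which is exactly the kind of closed identity that lets one evaluate $\bd^T\bx$ and $\bone^T\bx$ where $\bx=\bv_a$. First I would use this scalar identity to write $\bd^T\bv_a=\bone^T A\bv_a=\bigl(\lambda-\tfrac1\lambda\bigr)\bone^T\bv_a+\tfrac1\lambda\,\bone^T D\bv_a=\bigl(\lambda-\tfrac1\lambda\bigr)+\tfrac1\lambda\,\bd^T\bv_a$, since $\bone^T D=\bd^T$; solving gives $\bd^T\bv_a\bigl(1-\tfrac1\lambda\bigr)=\lambda-\tfrac1\lambda$, hence $\bd^T\bv_a=\dfrac{\lambda-1/\lambda}{1-1/\lambda}=\dfrac{\lambda^2-1}{\lambda-1}=\lambda+1$ (assuming $\lambda\neq 1$, which holds for connected non-trivial graphs since $\lambda>\rho(A)\ge 1$; wait—one must check $\lambda>1$, but for a connected graph on $\ge 2$ nodes $\rho(A)\ge 1$ and the dominant nonbacktracking eigenvalue exceeds it in the relevant regime, so $\lambda>1$).

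With $\bd^T\bv_a=\lambda+1$ in hand, the generalized global friendship inequality \eqref{eq:genfpglobal} with $\bx=\bv_a$ and $\bone^T\bv_a=1$ becomes
\[
\frac{\lambda+1}{\bone^T\bd}-\frac{1}{n}\ \ge\ 0,
\]
i.e. $n(\lambda+1)\ge \bone^T\bd = 2m$ where $m$ is the number of edges. Equivalently, the claim reduces to the purely spectral statement $\lambda\ge \bar d-1$, where $\bar d=2m/n$ is the average degree. So the second step is to prove the lower bound $\lambda\ge \bar d - 1$ on the dominant nonbacktracking eigenvalue. I would try to get this from a Rayleigh-type argument: take a suitable nonnegative test vector in the $2n$-dimensional space—natural candidates are $(\bone,\bone)$ or $(\bd,\bone)$ or $(\bone,0)$—and use the Collatz–Wielandt / Perron characterization $\lambda=\max_{\bz\ge 0,\ \bz\neq 0}\min_i (M\bz)_i/z_i$ for the (irreducible, up to the standard reduction) nonnegative pencil, or its dual min-max form. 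For instance with $\bz=(\bone,\bone)$ one gets $M\bz=(A\bone+(I-D)\bone,\ \bone)=(\bd+\bone-\bd,\ \bone)=(\bone,\bone)=\bz$, which only shows $1$ is an eigenvalue-like quantity and gives the weak bound $\lambda\ge 1$; that is not enough. A better choice is $\bz=(\bd,\bone)$: then $(M\bz)$ has first block $A\bd+(I-D)\bone=A\bd+\bone-\bd$ and second block $\bd$, and one compares entrywise with $\lambda(\bd,\bone)=(\lambda\bd,\lambda\bone)$; the second-block comparison forces $\lambda\le$ something, so instead I would use the averaged/weighted version. The cleanest route is probably: multiply the eigen-relation $A\bv_a=(\lambda-\tfrac1\lambda)\bv_a+\tfrac1\lambda D\bv_a$ on the left by $\bone^T$ (done above) and also by $\bv_a^T$ to get a second scalar identity, then combine with the known bound $\bv_a^T A\bv_a\le \rho(A)\,\bv_a^T\bv_a$ and a Cauchy–Schwarz estimate relating $\bv_a^T\bv_a$, $\bone^T\bv_a$, $\bd^T\bv_a$; alternatively, argue directly that $\lambda$ satisfies $\lambda^2-\lambda\bar d_{\bv_a}+(\bar d_{\bv_a}-1)\le 0$ is false, i.e. $\lambda\ge \bar d_{\bv_a}-1$ for the $\bv_a$-weighted degree, and then bridge from the $\bv_a$-weighted average degree to the uniform average degree $\bar d$.

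The main obstacle I anticipate is precisely this last bridge: the identity $\bd^T\bv_a=\lambda+1$ is exact and elementary, so the whole theorem collapses to the single inequality $\lambda\ge \bar d-1$, and that is a genuine spectral fact about the nonbacktracking operator that does not follow from a one-line Rayleigh quotient with the obvious test vectors. I expect the right argument is to use the block structure to show $\lambda$ is the largest root of a degree-$n$-ish "deformed characteristic" equation and then apply a convexity or majorization bound; a slicker alternative, which I would try first, is to exhibit the scalar $\mu:=\bd^T\bv_a-1=\lambda$ and independently lower-bound $\bd^T\bv_a$ by $\bd^T\bone/n\cdot(\text{something})$—but since $\bv_a$ is not uniform this needs care, and the sign of the correlation between $\bv_a$ and $\bd$ is the crux. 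In fact, since \eqref{eq:genfpglobal} for $\bx=\bv_a$ is, by Eom–Jo's characterization, equivalent to $\bv_a$ being nonnegatively correlated with $\bd$, and we have shown it is equivalent to $\lambda\ge\bar d-1$, the cleanest writeup may be to prove $\mathrm{Cov}(\bv_a,\bd)\ge 0$ directly from $A\bv_a=(\lambda-\tfrac1\lambda)\bv_a+\tfrac1\lambda D\bv_a$, which rearranges to $D\bv_a=\lambda A\bv_a-(\lambda^2-1)\bv_a$, expressing the degree-weighted centrality in terms of $A\bv_a$ and $\bv_a$; pairing this with $\bone$ and with $\bv_a$ and using $\bv_a\ge 0$ together with $A\bv_a\ge 0$ should deliver the correlation sign. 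I would present the final proof in that order: (i) derive $\bv_b=\bv_a/\lambda$ and the reduced eigen-relation; (ii) obtain $\bd^T\bv_a=\lambda+1$; (iii) reduce the inequality to $\lambda\ge\bar d-1$; (iv) prove that spectral bound via Collatz–Wielandt with a carefully weighted test vector, flagging this as the technical heart.
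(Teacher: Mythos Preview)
Your reduction is exactly right and matches the paper: from the block eigen-relations you correctly obtain $\bv_b=\bv_a/\lambda$, the identity $\bd^T\bv_a=\lambda+1$, and hence the equivalence of \eqref{eq:genfpglobal} with the spectral bound $\lambda\ge \bar d-1$. That is precisely where the paper arrives as well.

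The genuine gap is step (iv). You flag $\lambda\ge\bar d-1$ as ``the technical heart'' and propose to extract it from Collatz--Wielandt on $M$ or from a correlation/Cauchy--Schwarz argument, but none of these closes. Two concrete obstructions: first, $M$ is \emph{not} entrywise nonnegative (the block $I-D$ has entries $1-d_i\le 0$), so the Collatz--Wielandt min-ratio characterisation does not apply to $M$ directly; one would have to pass to the $2m\times 2m$ Hashimoto matrix on directed edges, and there the obvious constant test vector again only yields $\lambda\ge 1$. Second, your ``prove $\mathrm{Cov}(\bv_a,\bd)\ge 0$ directly'' route is circular, since by the Eom--Jo characterisation that covariance inequality \emph{is} the statement to be proved. (A side remark: your parenthetical ``$\lambda>\rho(A)$'' is false---for a $d$-regular graph $\lambda=d-1<\rho(A)=d$---so that is not a viable route to $\lambda>1$ either.)

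The paper does not prove $\lambda\ge\bar d-1$ by an elementary argument at all. It recognises the inequality as the Terras conjecture on the Ihara zeta function (via $\lambda=1/R$) and invokes Saito's 2018 proof of that conjecture, which applies to graphs with minimum degree at least two that are not cycles. The paper then disposes of the residual cases separately: degree-one vertices can be iteratively pruned without changing $\lambda$ (citing \cite{GK21,NQ24}), and for a cycle $\lambda=1$, which gives equality. Your proposal omits both the external input (Saito's theorem) and this case analysis; without a replacement for Saito's result, step (iv) is not complete.
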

  \begin{proof}
  Let $\lambda \ge 1$ denote the largest real-valued eigenvalue of the system (\ref{eq:nbteig}). 
  As mentioned above, if $\lambda = 1$ then 
  $\bv_a = \bone/n$ and the inequality holds with equality. 

  Hence we may assume that there is a unique 
  real dominant eigenvalue $\lambda >1$.

  From (\ref{eq:nbteig}), we have 
  \begin{equation}
  A \bv_a = \lambda \bv_a + (D-I) \bv_b,
   \label{eq:nba}
  \end{equation}
  and
  \[
   \bv_a = \lambda \bv_b.
   \]
  So
  \[
    A \bv_a = \lambda \bv_a + (D-I) \frac{\bv_a}{\lambda},
  \]
  which may be written 
    \[
    \lambda A \bv_a = ((\lambda^2 - 1) I  + D)  \bv_a.
  \]
  So
    \[
    \lambda \bone^T A \bv_a = \bone^T ((\lambda^2 - 1) I  + D)  \bv_a.
  \]
  Using $\bone^T \bv_a = 1$,
  this simplifies to
    \[
    (\lambda -1 ) \bd^T \bv_a =  (\lambda^2 - 1).
  \]
  Equivalently 
     \[
    \bd^T \bv_a =  \lambda + 1.
  \]
  
So (\ref{eq:genfpglobal}) holds for $\bx = \bv_a$ if and only if
\begin{equation}
\lambda \ge \frac{\bone^T \bd}{n} -1. 
\label{eq:terras}
\end{equation}

We now make use of the connection between the eigenvalue problem (\ref{eq:nbteig}), the 
Hashimoto, or nonbacktracking,  matrix $B$ (which we define in subsection~\ref{subsec:l1}),
and the radius of convergence, $R$, of the Ihara zeta function of the graph.
The Ihara zeta function is defined as
\[ \zeta(u)^{-1} = \mathrm{det}(I - u B).\]
It is known  
that 
$\lambda$ is equal to the largest eigenvalue of the 
nonbacktracking matrix, which itself is equal to the reciprocal of $R$ (see \cite[Thm~5.6.15]{Hor2013} applied to the matrix $zA$); so $\lambda = 1/R$.
Terras \cite{T2011} conjectured that 
\[
 \frac{1}{R} \ge \frac{\bone^T \bd}{n} -1
\]
(which is precisely (\ref{eq:terras})) 
for a graph that does not have a node of degree one and is not a cycle. This conjecture was proved by Saito in 2018 \cite{S18}.

To finish we need to deal with the remaining cases
where the graph has a node of degree one or the graph is a cycle.

It is shown in \cite[Corollary~3.2.1]{GK21} that increasing the size of the graph by taking one node and joining it to a tree 
does not change $\lambda$. Hence, removing a single node of degree one, along with the corresponding edge, does not change $\lambda$.
See also \cite[Theorem~1.2]{NQ24} for a more direct argument that also applies to directed graphs.
We may therefore iteratively remove all nodes of degree one until we have a graph where all nodes have degree at least two, whence the analysis above applies if the remaining graph is not a cycle.

If the graph is a cycle, then from \cite[Proposition~3.6]{T20} $\lambda = 1$
and we return to the equality case. 
\end{proof}

    \section{
      Generalized Local Friendship Paradoxes}
    \label{sec:glfp}

We now prove the equivalent of Theorem~\ref{thm:globnbt} for the local case.

\begin{theorem}[Generalized local nonbacktracking eigenvector
  centrality paradox] \label{thm:locnbt}
 The generalized local friendship inequality (\ref{eq:genfplocal}) holds for 
 nonbacktracking eigenvector
  centrality (defined via $\bx = \bv_a$ in (\ref{eq:nbteig})). 
  \end{theorem}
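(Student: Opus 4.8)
The plan is to recycle the algebraic identities already produced inside the proof of Theorem~\ref{thm:globnbt} and then collapse the claim to a single Cauchy--Schwarz estimate. Pleasingly, in contrast to the global case, this route does not require Saito's theorem (the resolved Terras conjecture) at all. Recall from that proof that the dominant eigenvalue satisfies $\lambda\ge 1$; that $\lambda=1$ forces $\bv_a=\bone/n$ after normalization, in which case $D^{-1}A\bv_a=D^{-1}\bd/n=\bone/n=\bv_a$ and the generalized local friendship inequality (\ref{eq:genfplocal}) holds with equality; and that for $\lambda>1$ the eigenvalue problem (\ref{eq:nbteig}) yields
\[
\lambda\, A\bv_a=\big((\lambda^2-1)I+D\big)\bv_a,
\qquad
\bd^T\bv_a=\lambda+1 .
\]
So I would dispose of $\lambda=1$ immediately and work with $\lambda>1$ from here on.

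First I would multiply the first identity on the left by $D^{-1}$, giving $D^{-1}A\bv_a=\tfrac1\lambda\big((\lambda^2-1)D^{-1}\bv_a+\bv_a\big)$, and then apply $\bone^T$ and use $\bone^T\bv_a=1$ to get $\bone^T D^{-1}A\bv_a=\tfrac1\lambda\big((\lambda^2-1)\,\bone^T D^{-1}\bv_a+1\big)$. Hence (\ref{eq:genfplocal}) for $\bx=\bv_a$, i.e.\ $\bone^T D^{-1}A\bv_a\ge\bone^T\bv_a=1$, is equivalent to $(\lambda^2-1)\,\bone^T D^{-1}\bv_a\ge\lambda-1$; since $\lambda>1$ I may divide through by $\lambda-1>0$, reducing the target to the equivalent statement $(\lambda+1)\,\bone^T D^{-1}\bv_a\ge 1$.

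Next I would substitute $\lambda+1=\bd^T\bv_a$, so that the inequality to be shown becomes
\[
\big(\bd^T\bv_a\big)\big(\bone^T D^{-1}\bv_a\big)\ \ge\ 1=\big(\bone^T\bv_a\big)^2 .
\]
Writing $p_i:=(\bv_a)_i\ge 0$ (entrywise nonnegativity of $\bv_a$ is part of the standing assumptions, and $d_i\ge 1$ in a connected graph, so every sum below is finite), this is exactly $\big(\sum_i p_i d_i\big)\big(\sum_i p_i/d_i\big)\ge\big(\sum_i p_i\big)^2$, which is Cauchy--Schwarz applied to the vectors with entries $\sqrt{p_i d_i}$ and $\sqrt{p_i/d_i}$. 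This completes the argument.

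The main obstacle is conceptual rather than computational: recognising that the inequality should be rewritten as the product of a degree-weighted average and a reciprocal-degree-weighted average of the probability vector $\bv_a$, after which the result is one step. The only technical care points are isolating the boundary case $\lambda=1$ before dividing by $\lambda^2-1$, and, if one wishes additionally to characterise equality (regular graphs), supplying the argument that $\bv_a$ is entrywise positive so that equality in Cauchy--Schwarz forces $\bd$ to be constant.
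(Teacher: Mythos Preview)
Your argument is correct and rests on the same core inequality as the paper's proof: both reduce the claim to the weighted AM--HM (equivalently Cauchy--Schwarz) bound $(\bd^T\bv_a)(\bd^{-T}\bv_a)\ge(\bone^T\bv_a)^2=1$. The organisation differs only slightly: you import the identities $\lambda A\bv_a=((\lambda^2-1)I+D)\bv_a$ and $\bd^T\bv_a=\lambda+1$ from the proof of Theorem~\ref{thm:globnbt}, isolate $\lambda=1$, and obtain the exact formula $\bone^T(D^{-1}A\bv_a-\bv_a)=\tfrac{\lambda-1}{\lambda}\big[(\bd^T\bv_a)(\bd^{-T}\bv_a)-1\big]$, whereas the paper keeps $\bv_b$ in play and carries the extra nonnegative term $\bone^T(D-I)\bv_b\,(1-\bd^{-T}\bv_a)$ in \eqref{eq:nonneg}. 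Your route is a bit cleaner and, incidentally, sidesteps a slip in the paper's intermediate step where $\bone^T D^{-1}(D-I)\bv_b$ is written as $\bone^T(D-I)\bv_b$ rather than $(\bone^T-\bd^{-T})\bv_b$.
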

  \begin{proof}
    Premultiplying (\ref{eq:nba})
    by $\bone^T$,
 using $\bone^T \bv_a = 1$
 and rearranging,
 we have 
  \begin{equation}
  \lambda = \bd^T \bv_a - \bone^T ( D - I) \bv_b.
  \label{eq:lambda}
  \end{equation}

 Starting from the left-hand side of the inequality 
 (\ref{eq:genfplocal}) with $\bx = \bv_a$, we have, using (\ref{eq:nba}),  
  \begin{eqnarray*}
      \bone^T \left( D^{-1} A \bv_a - \bv_a \right) & = & 
          \bone^T \left( D^{-1} (\lambda \bv_a + (D-I) \bv_b)
           - \bv_a \right)\\
           &=& \lambda \bd^{-T} \bv_a + \bone^T (D-I) \bv_b - 1.
  \end{eqnarray*}
  Using the expression (\ref{eq:lambda}) for $\lambda$, this gives
  \begin{eqnarray}
      \bone^T \left( D^{-1} A \bv_a - \bv_a \right) & = &
      \left(
      \bd^T \bv_a - \bone^T ( D - I) \bv_b
      \right)
      \bd^{-T} \bv_a
      +
       \bone^T (D-I) \bv_b - 1 \nonumber\\
       &=& (\bd^T \bv_a) (  \bd^{-T} \bv_a ) - 1 +  \bone^T ( D - I) \bv_b \left( 1 - \bd^{-T} \bv_a \right).
           \label{eq:nonneg}
        \end{eqnarray}
Now the 
 arithmetic mean-harmonic mean inequality 
 shows that $ (\bd^T \bv_a) (  \bd^{-T} \bv_a ) \ge 1  $.
 Because $\bone$, $D - I$ and $\bv_b$ contain nonnegative elements, we have 
 $\bone^T ( D - I) \bv_b \ge 0$.
 Finally, since $\bone^T \bv_a = 1$ and each $d_i \ge 1$, we have 
 $\bd^{-T} \bv_a  \le 1$.
 So the right-hand side of (\ref{eq:nonneg}) is nonnegative, as required.
  \end{proof}

  We note that in the statement of Theorem~\ref{thm:locnbt} we cannot add that equality holds if and only if the network is regular.
  To see this, suppose $\bd = A \bone = d \bone$ for some common degree $d \ge 2$. Then the the right hand side of
  \eqref{eq:nonneg} reduces to 
  $(d-1)(1-\frac{1}{d})\bone^T\bv_b$, which 
  is nonzero.

    \section{Generalized Friendship Paradoxes: Global versus Local} \label{sec:globloc}
The generalized global and local friendship inequalities are similar in the sense that 
both hold universally for the attributes of degree 
(Theorems~\ref{thm:FPglobal} 
and \ref{thm:FPlocal})
and eigenvector centrality (Theorems~\ref{thm:eigFPglobal} and \ref{thm:HV25}).
However, the characterizations mentioned in subsections~\ref{subsec:glob} and \ref{subsec:loc} suggest that they are not equivalent---the global version corresponds to 
correlation with the degree, while the local version corresponds to correlation with the 
vector $\bkappa$ in (\ref{eq:kappa}). Here, we confirm with a concrete example that 
all four possibilities of the two inequalities holding/not holding may arise.
We consider the four-node network with adjacency matrix
\[
A = \left[ 
\begin{array}{cccc}
 0 & 1 & 0 & 0 \\
 1 & 0 & 1 & 1 \\
 0 & 1 & 0 & 1 \\
 0 & 1 & 1 & 0 
 \end{array}
 \right].
 \]
 Given an attribute vector $\bx$ 
 it is straightforward to show that the 
 generalized global friendship inequality
 (\ref{eq:genfpglobal}) 
 and  
  generalized local friendship inequality
  (\ref{eq:genfplocal}) are 
   equivalent to  
 \[
 x_2 - x_1 \ge 0 \quad \text{and} \quad 
 -4 x_1 + 6 x_2 - x_3 - x_4 \ge 0,   
 \]
 respectively. It follows that 
 \begin{itemize}
 \item with $x_1 = 1$, $x_2 = 2$ and $x_3 = x_4 = 6$ we have (\ref{eq:genfpglobal})
 but not (\ref{eq:genfplocal}),
  \item with $x_1 = 7$, $x_2 = 6$ and $x_3 = x_4 = 2$ we have 
  (\ref{eq:genfplocal}) 
  but not
  (\ref{eq:genfpglobal}),
   \item with $x_1 = 2$, $x_2 = 1$ and $x_3 = x_4 = 1$ we have
   neither 
    (\ref{eq:genfpglobal})
    nor
  (\ref{eq:genfplocal}),
    \end{itemize}
    and (immediately from Theorems~\ref{thm:FPglobal} 
and \ref{thm:FPlocal}) with $\bx = \bd$ we have both
      (\ref{eq:genfpglobal})
    and 
  (\ref{eq:genfplocal}).

 \section{Loneliness Paradoxes} \label{sec:reverse}

  By analogy with (\ref{eq:genfpglobal}) we may say that 
  a \emph{reverse generalized global friendship inequality} arises if, on average, a node has more of an attribute than its neighbours; that is, 
   \begin{equation}
  \frac{\bd^T \bx}{ \bone^T \bd} - \frac{\bone^T \bx}{n} \le 0. 
\label{eq:genfpglobalrev}
      \end{equation} 
Here we show the intuitively reasonable result that 
the reciprocal of degree has this property. Thus if we regard
$\bd^{-1}$ as a measure of loneliness, then 
on average we are at least as lonely as our friends.

\begin{theorem}[Global loneliness paradox]\label{thm:revgfp} 
       The 
        reverse generalized global friendship  
       inequality (\ref{eq:genfpglobalrev}) holds for 
       $\bx = \bd^{-1}$,
       with equality      
      if and only if the network is regular. 
      \end{theorem}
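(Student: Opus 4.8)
The plan is to reduce the inequality (\ref{eq:genfpglobalrev}) for $\bx = \bd^{-1}$ to a statement that can be handled by a Chebyshev-type correlation argument, exactly as in the characterization of Eom and Jo. Writing out the two sides, the reverse inequality $\frac{\bd^T \bd^{-1}}{\bone^T \bd} \le \frac{\bone^T \bd^{-1}}{n}$ is equivalent, after clearing denominators, to
\[
n\,\bd^T \bd^{-1} \;\le\; (\bone^T \bd)(\bone^T \bd^{-1}),
\]
and since $\bd^T \bd^{-1} = \sum_i d_i \cdot d_i^{-1} = n$, this simplifies to the clean statement
\[
n^2 \;\le\; (\bone^T \bd)(\bone^T \bd^{-1}) \;=\; \Big(\sum_i d_i\Big)\Big(\sum_i \tfrac{1}{d_i}\Big).
\]
So the whole theorem reduces to the Cauchy--Schwarz (or equivalently arithmetic mean--harmonic mean) inequality applied to the positive numbers $d_1,\dots,d_n$.

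First I would establish the algebraic reduction above, being careful that $\bd^T \bd^{-1} = n$ because every coordinate contributes $d_i/d_i = 1$; this uses only that the graph has no isolated nodes, which holds since the network is connected and $n \ge 2$ (so every $d_i \ge 1$). Second, I would invoke the AM--HM inequality: for positive reals, $\big(\frac1n\sum_i d_i\big)\big(\frac1n\sum_i d_i^{-1}\big) \ge 1$, which is exactly $(\sum_i d_i)(\sum_i d_i^{-1}) \ge n^2$. This gives the desired inequality (\ref{eq:genfpglobalrev}). Third, for the equality characterization, I would use the standard fact that equality in AM--HM (equivalently in Cauchy--Schwarz applied to the vectors with entries $\sqrt{d_i}$ and $1/\sqrt{d_i}$) holds if and only if all the $d_i$ are equal, i.e.\ the network is regular; conversely, if the network is $d$-regular then both sides of (\ref{eq:genfpglobalrev}) equal $1/d$, so equality holds.

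I do not expect any serious obstacle here: the content is entirely the observation that the cross term $\bd^T \bd^{-1}$ collapses to $n$, after which a classical scalar inequality finishes the job. The only point requiring a moment's care is the direction of the inequality---one must track that dividing through by the positive quantity $\bone^T \bd$ preserves the sense, and that the reverse inequality (\ref{eq:genfpglobalrev}) is indeed what AM--HM delivers rather than its opposite. If one prefers, the same conclusion can be packaged via the Chebyshev sum inequality or via the Eom--Jo characterization: since $\bd$ and $\bd^{-1}$ are oppositely ordered, $\bd^{-1}$ is nonpositively correlated with $\bd$, which is precisely the condition equivalent to the reverse generalized global friendship inequality.
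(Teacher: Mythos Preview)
Your proposal is correct and follows essentially the same route as the paper: substitute $\bx = \bd^{-1}$, observe that $\bd^T\bd^{-1}=n$, and reduce the reverse inequality to $n^2 \le (\bone^T\bd)(\bone^T\bd^{-1})$, which is exactly the arithmetic mean--harmonic mean inequality with equality iff the degrees are constant. The only difference is cosmetic---you clear denominators while the paper leaves the expression as $\frac{n}{\bone^T\bd} \le \frac{\bone^T\bd^{-1}}{n}$ before invoking AM--HM.
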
  
     \begin{proof}   
     In this case (\ref{eq:genfpglobalrev}) becomes
     \[
      \frac{\bd^T \bd^{-1}}{ \bone^T \bd} - \frac{\bone^T \bd^{-1}}{n} \le 0,
     \]
     which simplifies to
    \[
      \frac{n}{ \bone^T \bd} - \frac{\bone^T \bd^{-1}}{n} \le 0.
     \] 
      The result now follows directly from the 
      arithmetic mean-harmonic mean inequality \cite{MW12}. 
     \end{proof}

Similarly, we may say that a 
       \emph{reverse generalized local friendship inequality} arises if (\ref{eq:genfplocal}) is reversed, so that 
        \begin{equation}
      \bone^T ( D^{-1} A \bx - \bx ) \le 0.
     \label{eq:genfplocalrev}
      \end{equation}
      The next result shows that the corresponding paradox holds.

\begin{theorem}[Local loneliness paradox]\label{thm:revlfp} 
       The 
        reverse generalized local friendship  
       inequality (\ref{eq:genfplocalrev}) holds for 
       $\bx = \bd^{-1}$,
       with equality      
      if and only if the network is regular. 
      \end{theorem}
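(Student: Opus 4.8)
The plan is to make the inequality completely explicit and then recognise it as a sum of squares over the edges. Writing out the two terms with $\bx=\bd^{-1}$, we have $\bone^T D^{-1}A\bd^{-1} = \sum_{i,j} a_{ij}/(d_i d_j)$ and $\bone^T\bd^{-1} = \sum_i 1/d_i$, so the reverse generalized local friendship inequality \eqref{eq:genfplocalrev} for $\bx=\bd^{-1}$ is equivalent to the claim
\[
\sum_{i,j}\frac{a_{ij}}{d_i d_j} \;\le\; \sum_{i}\frac{1}{d_i}.
\]

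The key step is to re-express the right-hand side as an edge sum so that the inequality can be attacked termwise. Since $\sum_j a_{ij} = d_i$, we have $\sum_{i,j} a_{ij}/d_i^2 = \sum_i (1/d_i^2)\sum_j a_{ij} = \sum_i 1/d_i$, and likewise, swapping the roles of $i$ and $j$ and using $a_{ij}=a_{ji}$, $\sum_{i,j} a_{ij}/d_j^2 = \sum_i 1/d_i$. Averaging these two identities gives
\[
\sum_i \frac1{d_i} = \frac12\sum_{i,j} a_{ij}\left(\frac1{d_i^2}+\frac1{d_j^2}\right).
\]

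Subtracting, the quantity we must show to be nonnegative is
\[
\sum_i \frac1{d_i} - \sum_{i,j}\frac{a_{ij}}{d_i d_j}
= \frac12\sum_{i,j} a_{ij}\left(\frac1{d_i^2} - \frac{2}{d_i d_j} + \frac1{d_j^2}\right)
= \frac12\sum_{i,j} a_{ij}\left(\frac1{d_i}-\frac1{d_j}\right)^2,
\]
which is manifestly $\ge 0$, establishing the paradox. For the equality case, the displayed sum vanishes precisely when $d_i = d_j$ for every pair with $a_{ij}=1$, i.e.\ degree is constant along every edge; connectedness of the network then forces all degrees to coincide, so the network is regular. Conversely, if the network is regular every summand is zero, so equality holds.

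The main obstacle is essentially just the bookkeeping in the second paragraph: one has to spot that $\sum_i 1/d_i$ is not merely a ``diagonal'' term but can be symmetrised into the edge sum $\tfrac12\sum_{i,j} a_{ij}(1/d_i^2+1/d_j^2)$, which is exactly what is needed to complete the square against $\sum_{i,j} a_{ij}/(d_id_j)$. Once that representation is in hand, the inequality and its equality condition are immediate, with the arithmetic mean--harmonic mean inequality playing the role of the underlying convexity fact (indeed the sum-of-squares form above is equivalent to applying AM--HM to each edge).
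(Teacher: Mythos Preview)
Your proof is correct and follows essentially the same route as the paper: both arguments rewrite $\bone^T(D^{-1}A\bd^{-1}-\bd^{-1})$ as $-\tfrac{1}{2}\sum_{i,j}a_{ij}(1/d_i-1/d_j)^2$ via symmetrisation of the edge sum. Your treatment of the equality case is in fact slightly more explicit than the paper's, since you spell out that constancy of degree along edges plus connectedness forces regularity.
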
  
     \begin{proof}  
     We use similar arguments to those in 
     \cite[Section~2]{CKN21}.
 Taking $ x_i = 1/d_i$, in the left hand side of 
 (\ref{eq:genfplocalrev}) we have 
\[ \bone^T(D^{-1}A \bx- \bx) = \sum_{i,j} \left(\frac{a_{ij}}{d_id_j}\right) -\left(\sum_i\frac{1}{d_i}\right) = \sum_{i,j}\left(\frac{a_{ij}}{d_id_j} -\frac{a_{ij}}{d_i^2}\right) = \sum_{i,j}\left(a_{ij}\left(\frac{1}{d_id_j} -\frac{1}{d_i^2}\right)\right).\]
Since $A$ is symmetric we can rewrite this expression as
\[\frac{1}{2}\sum_{i,j}\left(a_{i,j}\left(\frac{2}{d_id_j} -\frac{1}{d_i^2} -\frac{1}{d_j^2}\right)\right) = -\frac{1}{2}\sum_{i,j}\left(a_{i,j}\left(\frac{1}{d_i} - \frac{1}{d_j}\right)^2\right) \le 0,\]
giving the required inequality.
We see that equality requires
$d_i = d_j$ whenever nodes $i$ and $j$ share an edge. Since the graph is connected, this requires the graph to be regular.
 \end{proof}

 \section{Geometric Mean Friendship Paradoxes} \label{sec:geom}

    The aim of this section is to 
consider using the geometric mean
in 
(\ref{eq:fpglobal})
and 
(\ref{eq:fplocal})
rather than the 
arithmetic mean.

\subsection{Global Geometric Mean Paradox}\label{subsec:globgeo}

To study the global case, we will make use of 
   the following lemma.

      \begin{lemma}\label{lem:mon} 
       Given a strictly increasing function $f :\RR \to \RR$,  
       the generalized global friendship inequality (\ref{eq:genfpglobal}) 
       holds 
       for $x_i = f(d_i)$, with equality if and only if the network is regular. 
       \end{lemma}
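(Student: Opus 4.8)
The plan is to clear the positive denominators: since $\bone^T\bd>0$ and $n>0$, the inequality (\ref{eq:genfpglobal}) for $x_i=f(d_i)$ is equivalent to
\[
n\sum_i d_i f(d_i)\;\ge\;\Big(\sum_i d_i\Big)\Big(\sum_i f(d_i)\Big).
\]
This is a special case of Chebyshev's sum inequality for the similarly ordered sequences $(d_i)$ and $(f(d_i))$, but rather than quoting it I would prove it directly using the same symmetrization trick that underlies the proof of Theorem~\ref{thm:globfA}. Specifically, expanding and using $\sum_i 1=n$ gives the identity
\[
\sum_i d_i f(d_i)-\frac{1}{n}\Big(\sum_i d_i\Big)\Big(\sum_j f(d_j)\Big)\;=\;\frac{1}{2n}\sum_{i,j}(d_i-d_j)\big(f(d_i)-f(d_j)\big).
\]

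Next I would note that each summand $(d_i-d_j)\big(f(d_i)-f(d_j)\big)$ is nonnegative, since $f$ is increasing forces the two factors to have the same sign (or one of them to vanish). Hence the right-hand side of the identity is $\ge 0$, and dividing through by $\bone^T\bd>0$ yields the generalized global friendship inequality for $\bx$. For the equality statement, strict monotonicity of $f$ makes $(d_i-d_j)\big(f(d_i)-f(d_j)\big)>0$ whenever $d_i\neq d_j$, so the double sum vanishes precisely when all degrees coincide, i.e.\ the network is regular; conversely, if the network is $d$-regular then $x_i=f(d)$ for every $i$ and both $\bd^T\bx/\bone^T\bd$ and $\bone^T\bx/n$ equal $f(d)$.

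I do not expect a real obstacle here: the lemma is in essence Chebyshev's sum inequality, equivalently the Eom--Jo characterization of (\ref{eq:genfpglobal}) as nonnegative correlation with the degree vector, applied to the comonotone pair $\bd$ and $f(\bd)$. The only point needing a little care is keeping the equality analysis tight --- ensuring strict monotonicity is genuinely used in the forward direction, and separately verifying the regular case gives equality in the reverse direction.
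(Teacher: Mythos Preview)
Your proof is correct and follows essentially the same approach as the paper: both arguments reduce the inequality to the nonnegativity of $\sum_{i,j}(d_i-d_j)\big(f(d_i)-f(d_j)\big)$, use monotonicity of $f$ to see each summand is nonnegative, and invoke strict monotonicity for the equality characterization. Your write-up is in fact slightly more explicit than the paper's (you state the symmetrization identity with the $\tfrac{1}{2n}$ factor and separately verify both directions of the equality case), but the underlying idea is identical.
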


\begin{proof}
Since $f$ is strictly increasing, we have, for any $i$ and $j$, 
$
\left(d_i - d_j\right) \left( f(d_i) - f(d_j) \right) \ge 0$,
with equality if and only if $d_i = d_j$.
Summing, we obtain,
\[
\sum_{i,j}
\left(d_i - d_j\right) \left( f(d_i) - f(d_j) \right) \ge 0,
\]
with equality if and only if the network is regular.
Hence, $f(\bd)$ is nonnegatively correlated with $\bd$, and the correlation is positive
unless the graph is regular.

Upon simplification and rearrangement, this inequality reduces to 
\[
\sum_{i} \frac{d_i f(d_i)}{\bone^T \bd}
-
\frac{1}{n}
\sum_{i} f(d_i) \ge 0,
\]
as required.
\end{proof}

As we mentioned in section~\ref{sec:def},
Eom and Jo \cite{EJ14} showed that 
(\ref{eq:genfpglobal})
is equivalent to asking for $\bx$ to be  nonnegatively correlated with the degree vector.
Lemma~\ref{lem:mon} shows the intuitively reasonable result that applying an increasing
function to the degree vector respects the inequality.

As mentioned in the start of subsection~\ref{subsec:glob},
 the friend-of-friend list includes each degree value $d_i$ exactly $d_i$ times, and the list involves $\bone^T \bd$
numbers.
So the friend-of-friend geometric mean is 
\[
\left(\prod_{i=1}^{n}  {d_i}^{d_i}\right)^{1/\bone^T \bd}.
\]
The friend geometric mean is 
\[
\left(\prod_{i=1}^{n}  {d_i} \right)^{1/n}.
\]
So we may define the \emph{global geometric mean friendship inequality} to be 
\begin{equation}
\left(\prod_{i=1}^{n}  {d_i}^{d_i}\right)^{1/\bone^T \bd} \ge \left(\prod_{i=1}^{n}  {d_i} \right)^{1/n}.
\label{eq:globgeomfp}
\end{equation}

\begin{theorem}[Global geometric mean friendship paradox]\label{thm:globgmfp} 
       The 
       global geometric mean friendship inequality
       (\ref{eq:globgeomfp})
       holds, 
       with equality      
      if and only if the network is regular. 
      \end{theorem}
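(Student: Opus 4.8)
The plan is to reduce the multiplicative inequality (\ref{eq:globgeomfp}) to the additive generalized global friendship inequality (\ref{eq:genfpglobal}) by taking logarithms, and then to invoke Lemma~\ref{lem:mon}. First I would observe that, since the network is connected with $n\ge 2$ nodes, every degree satisfies $d_i\ge 1$, so each $\log d_i$ is well defined (the case $n=1$ is vacuous). Taking $\log$ of both sides of (\ref{eq:globgeomfp}) and using that $\log$ is strictly increasing, the inequality is seen to be equivalent to
\[
\frac{1}{\bone^T \bd}\sum_{i=1}^n d_i \log d_i \;\ge\; \frac{1}{n}\sum_{i=1}^n \log d_i .
\]

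The second step is to recognize the displayed inequality as precisely (\ref{eq:genfpglobal}) for the attribute vector $\bx$ with entries $x_i=\log d_i$; that is, $\bx = f(\bd)$ with $f=\log$. Since $\log$ is strictly increasing on $(0,\infty)$, Lemma~\ref{lem:mon} applies and delivers both the inequality and the characterization of equality. The only point requiring a moment's care is that Lemma~\ref{lem:mon} is stated for $f:\RR\to\RR$, whereas $\log$ is defined only on $(0,\infty)$; this is harmless, since the proof of the lemma evaluates $f$ only at the positive numbers $d_1,\dots,d_n$, and in any case one may extend $\log$ to a strictly increasing function on all of $\RR$.

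Finally, to transfer the equality statement back to the multiplicative form, I would note that $x\mapsto e^x$ is a strictly monotone bijection, so equality in the logarithmic inequality is equivalent to equality in (\ref{eq:globgeomfp}); by Lemma~\ref{lem:mon} this occurs exactly when all $d_i$ are equal, i.e.\ when the graph is regular. I do not anticipate any genuine obstacle here: once the logarithm is taken, the statement is a one-line consequence of Lemma~\ref{lem:mon}. (If a self-contained argument is preferred, one can instead expand $\sum_{i,j}(d_i-d_j)(\log d_i-\log d_j)\ge 0$ directly, as in the proof of Lemma~\ref{lem:mon}, but this merely reproduces that proof with $f=\log$.)
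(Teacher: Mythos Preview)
Your proposal is correct and follows essentially the same route as the paper: take logarithms to convert (\ref{eq:globgeomfp}) into the generalized global friendship inequality (\ref{eq:genfpglobal}) with $x_i=\log d_i$, and then invoke Lemma~\ref{lem:mon} with $f=\log$. Your extra remarks about the domain of $\log$ and the transfer of the equality case back through the exponential are sensible clarifications but do not change the underlying argument.
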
  
     \begin{proof}   
Taking logarithms in (\ref{eq:globgeomfp}) 
and simplifying gives  
\[
\sum_{i} 
\frac{d_i \log {d_i}}{ \bone^T \bd   }
   \ge 
\frac{1}{ n  }
\sum_{i} 
\log {d_i}.
\]
The result now follows from Lemma~\ref{lem:mon} with $f(\bd) = \log(\bd)$.



\end{proof}

\subsection{Local Geometric Mean Paradox}\label{subsec:locgeo}

To study the local case, we introduce an analogue of Lemma~\ref{lem:mon}.
This shows that applying an increasing function 
to the degree vector also respects the local inequality.

      \begin{lemma}\label{lem:mon2} 
       Given a strictly increasing function $f :\RR \to \RR$,  
       the generalized local friendship inequality (\ref{eq:genfplocal}) 
       holds 
       for $x_i = f(d_i)$, with equality if and only if the network is regular. 
       \end{lemma}

\begin{proof}
With $\bx = f(\bd)$ in (\ref{eq:genfplocal}) we have 
\begin{eqnarray*}
\bone^T ( D^{-1} A f(\bd) - f(\bd) )
  &=&
  \sum_{i,j}a_{ij}\frac{f(d_j)}{d_i} - \sum_j f(d_j)\\
  &=& \sum_{i,j}a_{ij}\frac{f(d_j)}{d_i} - \sum_{i,j}a_{ij}\frac{f(d_j)}{d_j}\\
  &=& \sum_{i,j}a_{ij}\left(\frac{f(d_j)}{d_i}-\frac{f(d_j)}{d_j}\right).
  \end{eqnarray*}
By symmetrizing we may then write
\begin{equation}
\bone^T ( D^{-1} A f(\bd) - f(\bd) )
=
\frac{1}{2}\sum_{i,j}a_{ij}\left(\frac{f(d_j)}{d_i}+\frac{f(d_i)}{d_j}-\frac{f(d_j)}{d_j}-\frac{f(d_i)}{d_j}\right).
\label{eq:feq}
\end{equation}

Now, consider each term in the summation on the right hand side of (\ref{eq:feq}). Assuming without loss of generality that 
$d_j \ge d_i$ we have 
\[ f(d_j) \geq f(d_i) \quad \mathrm{and}
\quad 
\frac{1}{d_i} \geq \frac{1}{d_j}.
\]
So, by an application of the rearrangement inequality
\[
\frac{f(d_j)}{d_i}+\frac{f(d_i)}{d_j}-\frac{f(d_j)}{d_j}-\frac{f(d_i)}{d_j}
\ge 0,
\]
with equality if and only if $d_i = d_j$.
The result follows.
\end{proof}

For a geometric mean analogue of the local inequality 
(\ref{eq:fplocal}) is natural 
to compare the geometric mean of the 
``geometric mean degree of node neighbours'' with the 
``geometric mean degree of the nodes.''
So we may define the \emph{local geometric mean friendship inequality} to be 
\begin{equation}
\left( 
\prod_i
\left(
\prod_j 
d_j^{a_{ij}}
\right)^{1/d_i}
\right)^{1/n}
\ge
\left( 
\prod_i
d_i
\right)^{1/n}.
\label{eq:locgeomfp}
\end{equation}

\begin{theorem}[Local geometric mean friendship paradox]\label{thm:locgmfp} 
       The 
       local geometric mean friendship inequality
       (\ref{eq:locgeomfp})
       holds, 
       with equality      
      if and only if the network is regular. 
      \end{theorem}  
     \begin{proof}
     After taking logarithms, 
     (\ref{eq:locgeomfp})
     simplifies to 
\[
\sum_{i}  \frac{1}{d_i} \sum_j a_{ij}  \log(d_j)  
\ge
\sum_i
\log(d_i). 
\]

This may be written
\[
\bone^T \left( D^{-1} A \log(\bd) - \log(\bd) \right) \ge 0.
\]
The result now follows from Lemma~\ref{lem:mon2} with $f(\bd) = \log(\bd)$.
\end{proof}

\section{Further Technicalities}\label{sec:technical}

In this section we tie up some loose ends arising from the previous sections, and briefly discuss some extensions.

\subsection{Equality in the Generalized Local Friendship Inequality with Even Walk Centrality}
\label{subsec:eqglfi}

We now consider when 
the generalized local friendship inequality 
(\ref{eq:genfplocal}) holds with equality
for walk count centrality
(\ref{eq:walk}).
The authors in \cite[Section~3.3]{HV25}
only mention the trivial case $\ell = 0$, where
equality always holds.
For general $\ell >0$, the situation is slightly more delicate.
For example, consider a (non-regular) complete bipartite graph based on sets of two and three nodes.
In this case,
\[
A = 
\left[
\begin{array}{ccccc}
0 & 0 & 1 & 1 & 1\\
0 & 0 & 1 & 1 & 1\\
1 & 1 & 0 & 0 & 0\\
1 & 1 & 0 & 0 & 0\\
1 & 1 & 0 & 0 & 0
\end{array}
\right],
\quad
\bd = 
\left[
\begin{array}{c}
3 \\
3 \\
2 \\
2 \\
2
\end{array}
\right],
\quad
A^2 = 
\left[
\begin{array}{ccccc}
3 & 3 & 0 & 0 & 0\\
3 & 3 & 0 & 0 & 0\\
0 & 0 & 2 & 2 & 2\\
0 & 0 & 2 & 2 & 2\\
0 & 0 & 2 & 2 & 2
\end{array}
\right],
\quad
\bx = A^2 \bone = 
\left[
\begin{array}{c}
6 \\
6 \\
6 \\
6 \\
6
\end{array}
\right].
\]
So
\[
D^{-1} A \bx = 
\left[
\begin{array}{c}
6 \\
6 \\
6 \\
6 \\
6
\end{array}
\right].
\]
Hence we have equality in (\ref{eq:genfplocal})
when $\ell = 2$. This argument readily extends to any complete bipartite graph.

The next theorem characterizes the equality case,
showing that we must distinguish between odd and even walk length. 
\begin{theorem}[Equality in the generalized local friendship paradox for walk centrality] \label{thm:even}
   For walk centrality (\ref{eq:walk}), 
    we have equality in the 
   generalized local friendship inequality 
   if and only if 
   \begin{itemize}
       \item for $\ell$ odd the graph is regular,
       \item for $\ell$ even the graph is
   either regular or biregular.
   \end{itemize}
  \end{theorem}
\begin{proof}
With $\bx = A^{\ell} \bone$, we have 
\[
 \bone^T  D^{-1} A \bx = 
 \bone^T  D^{-1} A^{\ell} D \bone = 
 \sum_{i,j} (A^{\ell}) \frac{d_i}{d_j}.
\]
So, in (\ref{eq:genfplocal}),
\begin{eqnarray*}
\bone^T ( D^{-1} A \bx - \bx )
&=& 
 \sum_{i,j} (A^{\ell})_{ij} \left(\frac{d_i}{d_j} - 1 \right)\\
 &=&
 \frac{1}{2}
 \left(
\sum_{i,j} (A^{\ell})_{ij} \left(\frac{d_i}{d_j} - 1 \right)
+
\sum_{i,j} (A^{\ell})_{ij} \left(\frac{d_j}{d_i} - 1 \right)
 \right)\\
 &=&
 \frac{1}{2}\sum_{i,j} (A^{\ell})_{ij} \left(\frac{d_i}{d_j} 
 +
 \frac{d_j}{d_i} 
 -2
\right).
\end{eqnarray*}

Now the arithmetic-geometric mean inequality tells that 
\[ 
\frac{d_j}{d_i}+\frac{d_i}{d_j} \geq 2\sqrt{\frac{d_j}{d_i}\frac{d_i}{d_j}}=2
\]
with equality if and only if $d_i = d_j$.
Hence $\bone^T ( D^{-1} A \bx - \bx ) = 0 $
if and only if 
\begin{equation}\label{eq:walkeq}
\text{for~every~} i \text{~and~} j 
\text{~we~have~either~} 
d_i = d_j \text{~or~} (A^{\ell})_{ij} = 0. 
\end{equation}
This is equivalent to the property that all  nodes in a connected component of the graph 
induced by 
$A^\ell$
have the same degree.

It follows immediately that we have equality for any regular graph.

Recall that the graph is assumed to be connected, so 
$A$ is irreducible. 
Consider first the case where $\ell$ is odd. Note that if $i$ and $j$ are connected in the original graph, then they are also connected by the walk of length 
$\ell$ given by 
\[ i -j-i-j-\cdots-i-j.\]
Hence,  $i$ and $j$ will also be 
connected in the graph induced by $A^\ell$. 
This shows that the graph induced by $A^\ell$ has at least all the edges of the original graph. Hence $A^\ell$ is irreducible, so it has only one connected component. As a consequence, (\ref{eq:walkeq}) holds if and only if the graph induced by $A^\ell$ is regular, in which case $\bone$ is an eigenvector of $A^\ell$. As $A^\ell$ and $A$ have the same eigenvectors, $\bone$ is an eigenvector of $A$ hence $A$ is regular.

Consider now the case of even $\ell$. It follows from \cite[Theorem 3.2]{powernon} 
that 
 $A$ has all powers irreducible if and only if the graph is not bipartite. In this case, we may proceed as  before, showing that  equality arises in the non-bipartite case if and only if the graph is regular. If the graph is bipartite, $A^2$ has exactly two connected components. As a consequence, for all $s \in \mathbb{N}$, $A^{2s}$ has at least $2$ connected components. To be precise, $A^{2s}$ has exactly $2$ connected components. In fact, consider $i$ and $j$ connected in $A^2$. This means that exists a node $k$ such that
 \[ i -k -j\]
 is a path in $A$. Then 
 \[ i-k-j-k-j-\dots-k-j\]
 is a path of length $2s$. 
 In conclusion, from 
 (\ref{eq:walkeq})
 we have equality if and only if the two connected components of $A^{\ell}$ are regular. Without loss of generality, we may assume that
 \[ A^2 = \begin{bmatrix}
 A_1 & 0\\
 0 & A_2
 \end{bmatrix}
 \quad
 \text{~and~consequently}
 \quad
  A^{2s} = \begin{bmatrix}
 A_1^s & 0\\
 0 & A_2^s
 \end{bmatrix}.\]
 As argued above, if $A_i^{s}$ is regular
 then $A_i$ is regular. This is precisely the definition of biregular graph.
\end{proof}

\begin{remark}\label{remark:r1}
We note that the proof above readily extends to show that the 
generalized local friendship inequality 
   (\ref{eq:genfplocal}) holds for eigenvector, and Katz centrality, 
   with equality if and only if the graph is regular. This is also shown in \cite{HV25}.
   Similarly, we may draw the same conclusion
   for exponential 
   centrality.
   \end{remark}

\begin{remark}\label{remark:r2}
    Theorem~\ref{thm:even} allows us to 
    conclude that the 
    generalized local friendship inequality
    holds for $\bx = f(A) \bone$
     when $f(A) = \sum_{k = 0}^{\infty} c_k A^k$
     takes the form of a convergent power series with all $c_k \ge 0$.
     If $c_k > 0$ for at least one odd value of $k$ then we have equality if and only if the graph is regular.
     If $c_k = 0 $ for all odd $k$ 
     and $c_k > 0$ for at least one even $k$
     then we have equality if and only if 
     the graph is either regular or biregular. 
     \end{remark}

\subsection{Counterexamples for Generalized Global Friendship Inequality with Even Walk Centrality}
\label{subsec:counterggfi}

We recall that Corollary~\ref{cor:fA} does not cover the case of even walk length centrality.
In fact when the walk length $\ell = 2$ in 
(\ref{eq:walk}) the inequality 
(\ref{eq:genfpglobal})
coincides with the difference of two Zagreb indices, which  arise in chemical graph theory 
\cite{Abd2012,Han2007}.
The authors in \cite{Han2007} studied this difference, and 
the 46 node graph displayed in 
\cite[Figure~2]{Han2007}
provides an example where
$\bx = A^2 \bone$
gives a value of 
$\approx -8 \times {10}^{-3}$
in the left hand side of 
(\ref{eq:genfpglobal}). Hence, we cannot extend the statement in Corollary~\ref{cor:fA} to 
walk length centrality with $\ell = 2$. 

A more simple counterexample 
for the $\ell = 2$ case is given by the 
19 node graph in Figure~\ref{fig:counter}.
Here, direct calculation shows that  
        $\bd^T \bx = 416$, 
        $\bone^T \bx = 198$ and  
        $\bone^T \bd = 40$.   
So 
$(\bd^T \bx) n  - (\bone^T \bd)(\bone^T \bx) = -16 < 0$ and hence (\ref{eq:genfpglobal}) fails.

\begin{figure}[H]

\caption{Counterexample for the generalized global friendship inequality (\ref{eq:genfpglobal}) with $\ell = 2$ walk centrality. \label{fig:counter}}
\centering
\begin{tikzpicture}[
scale = 0.8,
    every node/.style={circle,draw,inner sep=2pt},
    >=stealth
]

\node (v1) at (0,0) {1};

\foreach \i [count=\n from 0] in {2,...,13} {
    \node (v\i) at (\n*30:3) {\i};
    \draw (v1) -- (v\i); 
}

\node (v14) at (6,1.5) {14};
\node (v15) at (7.5,3) {15};
\node (v16) at (9,1.5) {16};

\draw (v14) -- (v15);
\draw (v15) -- (v16);
\draw (v16) -- (v14);

\node (v17) at (6,-1.5) {17};
\node (v18) at (7.5,-3) {18};
\node (v19) at (9,-1.5) {19};

\draw (v17) -- (v18);
\draw (v18) -- (v19);
\draw (v19) -- (v17);

\draw (v15) -- (v17);
\draw (v14) -- (v2);

\end{tikzpicture}
\end{figure}
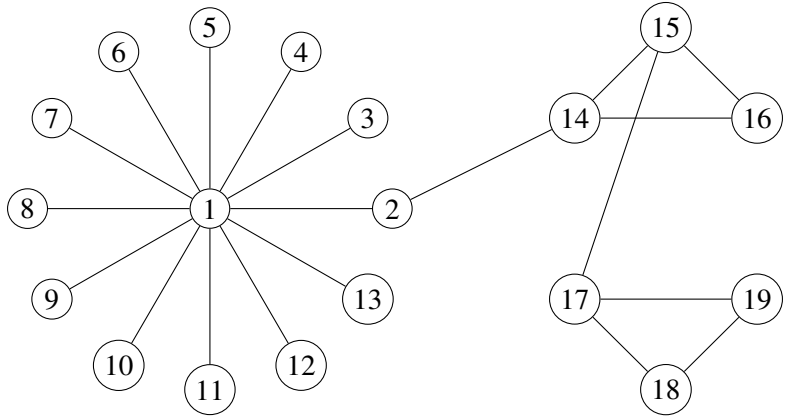

For the adjacency matrix associated with the graph in Figure~\ref{fig:counter} the two eigenvalues of largest absolute value are
$\lambda_1 \approx  3.4819$    
and $\lambda_n \approx -3.4793 $.
Because there is a single dominant eigenvalue,
as $\ell \to \infty$ the walk count centrality
$\bx = A^{\ell} \bone$ aligns with the 
Perron-Frobenius eigenvector. So, by 
Corollary~\ref{cor:fA}, for sufficiently large 
$\ell$ the inequality (\ref{eq:genfpglobal}) must hold.
This effect is illustrated 
in Figure~\ref{fig:even}, where the red  asterisks  
show the left hand side of (\ref{eq:genfpglobal}) 
with $\bx = A^{\ell}  \bone / \| A^{\ell}  \bone \|_2$
for $\ell = 2,4,6,\ldots,100$.
(Note from Remark~\ref{rem:scaling} that this normalization is justified.)
The two extreme eigenvalues have quite similar magnitudes, so a relatively large value of $\ell = 74 $ is needed for the effect of the dominant eigenvalue to 
lead to a positive left hand side.

\begin{figure}[htp]
    \centering
    \includegraphics[width=0.75\textwidth]{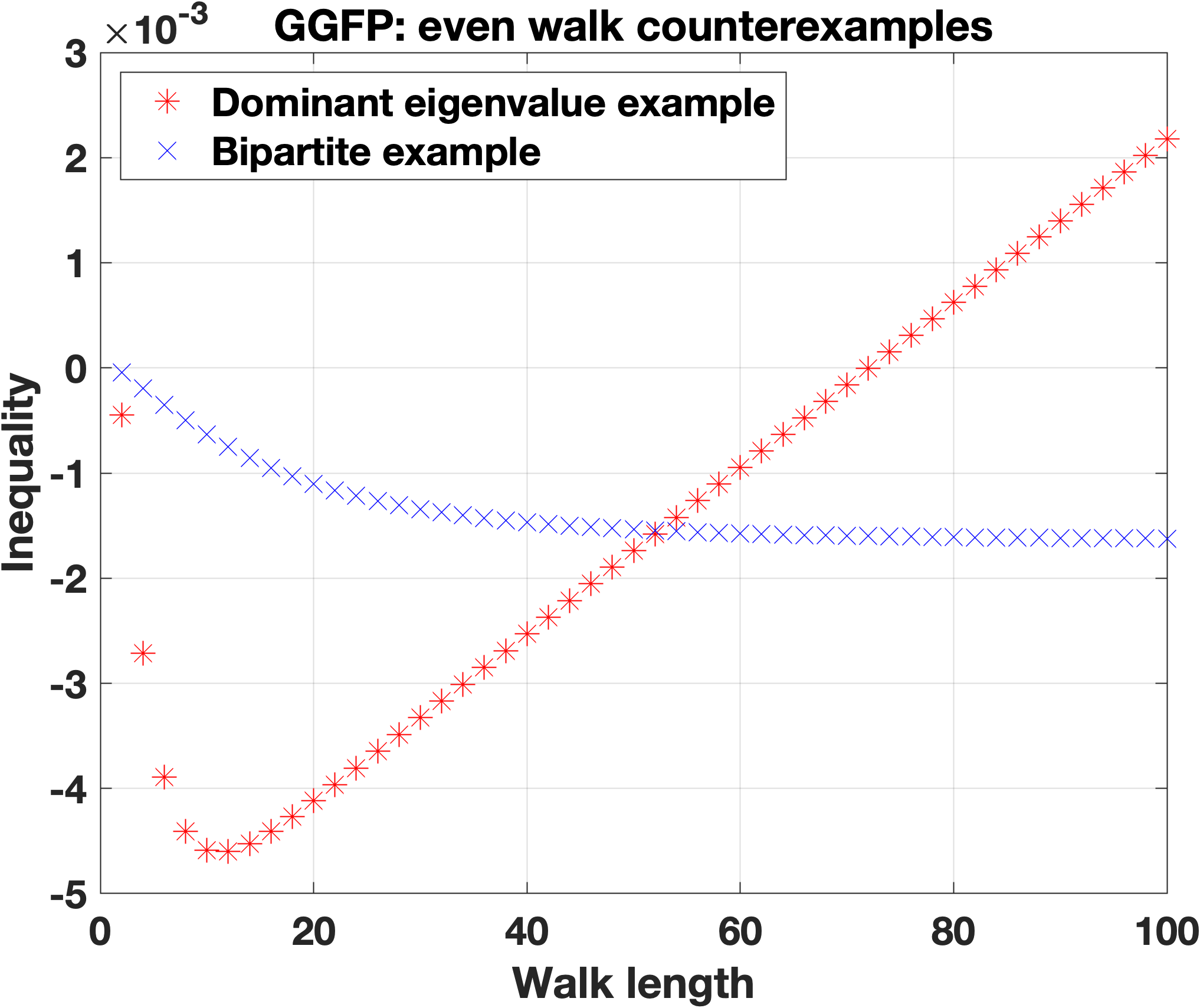}
    \caption{Left hand side of the 
    generalized global friendship inequality 
     (\ref{eq:genfpglobal})
     for even values of walk length
       $\ell$ in the normalized centrality measure
       $\bx = A^{\ell}  \bone / \| A^{\ell}  \bone \|_2$. Red stars correspond to the 
       graph in Figure~\ref{fig:counter},
       which provides a counterexample for
       even walk lengths $\ell = 2,4,6,\ldots,72$. 
       Blue crosses correspond to the bipartite
       graph in \cite[Figure~2]{Han2007}, which 
       provides a 
      counterexample for all even walk lengths. 
    }
    \label{fig:even}
\end{figure}

The blue crosses in Figure~\ref{fig:even} show the same values for the graph in 
\cite[Figure~2]{Han2007}. Here, because the graph is bipartite, there is a pair of dominant
eigenvalues, with opposite sign,  which we denote $\lambda_1 >0$
and $\lambda_n <0$.
We denote the corresponding eigenvectors  
$\bv_1$ and $\bv_n$.
It follows that as the even walk length $\ell$ tends to infinity the centrality vector
$\bx = A^{\ell}  \bone$ aligns with 
the vector $(\bone^T \bv_1) \bv_1 + 
(\bone^T \bv_n) \bv_n$. This limiting vector 
was found to violate the inequality
(\ref{eq:genfpglobal}), which explains why the
inequality fails to hold for large even walk lengths.

\begin{remark}\label{rem:r3}
It follows from Corollary~\ref{cor:fA}
that a generalized global friendship paradox holds
for $\bx = f(A) \bone$ when $f(A)$ is a convergent power series with only odd powers.
The counterexamples above show that a general 
statement of this form cannot be made when even powers are present.
\end{remark}

\subsection{Decreasing $f$ in $f(A) \bone$}\label{subsec:fdecrease}

The proof of Theorem~\ref{thm:globfA} makes use of the implication 
\[ f \text{ \ increasing} \implies (\lambda_i - \lambda_j)(f(\lambda_i)-f(\lambda_j)) \ge 0.\]
If $f$ is decreasing
over the spectrum of $A$, the reverse inequality holds; that is, 
\[ f \text{ \ decreasing} \implies (\lambda_i - \lambda_j)(f(\lambda_i)-f(\lambda_j)) \le 0,\]
and an analogue of Theorem~\ref{thm:globfA} follows with the inequality reversed. 
In this case, however, to have a valid 
centrality measure we must ensure that 
$\bx = f(A)\bone$
is nonnegative in all entries.

One example of a decreasing function
is $f(x) = \exp(-\beta x)$ for $\beta > 0$, which is used in \cite{EHH08}.
A closely related 
example is
$f(x) = 1/(1+\alpha x)$,
which is decreasing 
over the spectrum of $A$ 
for $0 < \alpha < 1/|\lambda_1|$, where  $\lambda_1 < 0$ is the most negative eigenvalue of $A$.
To obtain a condition for the 
nonnegativity of $\bx$ in this case 
we start with  
\[ \bx = (I+\alpha A)^{-1}\bone =\sum_{k=0}^{\infty}(-\alpha)^kA^k\bone,\]
which implies that
\[ \|\bx\|_{\infty} \le \sum_{k=0}^{\infty}(\alpha)^k(d_{\max})^k,\]
where $d_{\max}$ is the maximum degree. If we assume that $\alpha < 1/d_{\max}$ then the geometric sum converges, and 
$ \|\bx\|_{\infty} \le 1/(1-\alpha \, d_{\max})$.
We may then write 
\[ x_i = 1 - \sum_{j} \alpha \, a_{ij} x_j
\ge   1 - d_{\max} \, \alpha\, \|\bx\|_{\infty} \ge 1 - \frac{\alpha \, d_{\max}}{1-\alpha \, d_{\max}}.\]
The last quantity is nonnegative if 
$ \alpha \le 1/(2 d_{\max}) $.

\subsection{Functions of Loneliness}\label{subsec:fd} 
Using an argument similar to that in the proof of Lemma~\ref{lem:mon}, it follows
that Theorem~\ref{thm:revgfp} extends to any 
$\bx = f(\bd^{-1})$, where $f$ is an 
increasing function.

Using a similar strategy, we can
also extend Theorem~\ref{thm:revlfp} in the same way.
 In this case we may write  
\[ \bone^T(D^{-1}A \bx- \bx) = \sum_{i,j} \left(\frac{a_{ij}}{d_i}f\left(\frac{1}{d_j}\right)\right) -\left(\sum_if\left(\frac{1}{d_i}\right)\right) = \sum_{i,j}\left(\frac{a_{ij}}{d_i}f\left(\frac{1}{d_j}\right) -\frac{a_{ij}}{d_i}f\left(\frac{1}{d_i}\right)\right) =\]
\[ = \sum_{i,j}a_{ij}\left(\frac{1}{d_i}f\left(\frac{1}{d_j}\right) -\frac{1}{d_i}f\left(\frac{1}{d_i}\right)\right).\]
Since $A$ is symmetric we can rewrite this expression as
\[\frac{1}{2}\sum_{i,j}a_{ij}\left(\frac{1}{d_i}f\left(\frac{1}{d_j}\right)+\frac{1}{d_j}f\left(\frac{1}{d_i}\right) -\frac{1}{d_i}f\left(\frac{1}{d_i}\right)-\frac{1}{d_j}f\left(\frac{1}{d_j}\right)\right) \le 0,\]
which follows from the rearrangement inequality.

\subsection{Subgraph Centrality}\label{subsec:subgraph}

Closely related to walk-based centrality and 
exponential centrality is the 
\emph{subgraph centrality} measure
\cite{ER05}; see also \cite{BK13,EHSiamRev}.
This is defined as $\bx$ where $x_i = \exp(A)_{ii}$.
 After a computer search we found the adjacency matrix
\[
A = 
\left[
\begin{array}{ccccccc}
    0 & 1 &  0 &  0 &  1 & 1 & 1\\
    1  & 0 & 1  & 1 &  0 &  0 &  0\\
    0 &  1 &  0 & 1 & 0 & 0 & 1\\
    0  & 1 &  1 &  0 &  0 & 0 &  1\\
    1  & 0 &  0 &  0 & 0 & 0 & 0\\
    1 & 0 & 0 &  0 & 0 & 0 & 0\\
    1 & 0 & 1 &  1 &  0 & 0 & 0
    \end{array}
    \right]
\]
for which $\bx = (A^3)_{ii}$ produces
$ 
\bone^T( D^{-1} A \bx - \bx) = -1/3$. So the generalized local friendship inequality
(\ref{eq:genfplocal}) does not hold in general for $(f(A))_{ii}$ where 
$f(A)$ is a power series with positive coefficients, or where $f$ is increasing on the spectrum of $A$. However, 
it appears to be an open question whether a paradox holds for 
$\exp(A)_{ii}$, or for the diagonal entries of any other specific 
convergent power series.

\subsection{The case of $\lambda = 1$ for Nonbacktracking Eigenvector Centrality} \label{subsec:l1}

For the nonbacktracking eigenvector centrality
measure 
introduced in subsection~\ref{subsec:ggfpnb},
the special case where the dominant real eigenvalue $\lambda = 1$ leads to equality
in both friendship paradoxes. This motivates the 
discussion in this subsection, where we characterize when this special case arises. 

We first define the nonbacktracking or Hashimoto matrix, $B$ \cite{MZN14,NQ24,ST96}. Suppose the graph has $m$ edges.
We will regard each edge $i - j$ as a reciprocal 
pair of directed edges, $i \to j$ and $j \to i$.
Then, for a given ordering of these edges, 
$B \in \RR^{2m \times 2m}$ has the form
\[
b_{ i \to p, q \to j}
=
\left\{
\begin{array}{ll}
 1 & \text{~if~} i \neq j \text{~and~} p = q,
\\
 0 & \text{~otherwise.} 
 \end{array}
\right.
\]
In this way, $B$ records the presence of nonbacktracking walks of length two.
It is known, see for example \cite[Eq. (7)]{GK21}\cite{MZN14}, that if $\bv$ is an eigenvector of $B$  associated with the eigenvalue $\lambda \neq \pm 1$, then 
a corresponding eigenvector of the system
(\ref{eq:nbteig}) 
is given by 
\[
\left[
\begin{array}{c}
    \bx \\
    \by
\end{array}
\right],
\quad \text{~with~} \quad
 y_i = \sum_{i \leftrightarrow j} v_{j \rightarrow i} \quad x_i = \lambda y_i.
 \]

On the other hand, if $\begin{bmatrix}
    \bx \\ \by
\end{bmatrix}$ is an eigenvector of the system (\ref{eq:nbteig}), then 
\[ v(i \rightarrow j)  =  y_j -\frac{1}{\lambda} y_i\]
is an eigenvector of $B$, where we set $\frac{1}{\lambda} = 0$ if $\lambda = 0$. 
 Furthermore, we have a bijection between eigenvalues of the two matrices, excluding $\lambda = \pm 1$. This result is known as Ihara's formula.
 \begin{theorem}[{\cite[Thm. 2.1]{GK21}}]
 For a graph with $n$ vertexs and $m$ edges
 we have 
\[\det(I - uB) = (1 - u^2)^{m-n}\det(u^2(D - I) - uA + I).\]
 \end{theorem}

 As mentioned in subsection~\ref{subsec:ggfpnb}, the system (\ref{eq:nbteig}) always has $1$ as eigenvalue, with eigenvector $\bone$.
 If $\lambda = 1$ is dominant we use this  
 uniform 
 eigenvector to define the centrality measure.
 (The matrix $B$ may have $1$ as a multiple eigenvalue (multiple cycles) or may not have $1$ as an eigenvalue (for trees, $B$ is nilpotent).)

To fully characaterize the cases where 
the nonbacktracking matrix spectral radius strictly greater than $1$, we use some results from \cite{GK21}.
\begin{lemma}[{\cite[Prop. 5.5]{GK21}}]
Let $G$ be a connected graph such that $G$ is not a tree or cycle and with minimum degree $d_{\min} \ge 2$. Then the spectral radius of the nonbacktracking matrix is strictly greater than $1$.
\end{lemma}

The condition $d_{\min} \ge 2$ tells us that the graph has no dangling nodes. This assumption can be relaxed thanks to the following lemma.

\begin{lemma}[{\cite[Corollary 3.2.1]{GK21}}]\label{lem:GK}
    Let $G$ be a graph with $m$ edges and $T$ be a tree with $n$ vertices. Let $B$ be
the nonbacktracking matrix of $G$. Define $\widehat{G}$ as the graph constructed by joining $G$ and $T$ on one
vertex. Define $\widehat{B}$ as the nonbacktracking matrix of $\widehat{G}$. Then 
the eigenvalues of $\widehat{B}$ 
are precisely those of $B$ along with
an eigenvalue $0$ with algebraic multiplicity $2(n - 1)$.
\end{lemma}

Lemma~\ref{lem:GK} tells us that adding or removing dangling nodes does not alter the spectral radius of the nonbacktracking matrix. As a consequence, the nonbacktracking matrix of the $2$-core of a graph has the same spectral radius as the original nonbacktracking matrix. 

In conclusion, we have the following theorem.

\begin{theorem}\label{thm:twocore}
    The spectral radius of the nonbacktracking matrix is strictly greater than $1$ if and only if the $2$-core of $G$ is not a cycle (or empty).
\end{theorem}

\subsection{Well Posedness of Nonbacktracking Eigenvector Centrality}\label{subsec:nnbtheory}

In this subsection, for completeness we show that 
for the eigenvalue problem (\ref{eq:nbteig}) it is always possible to construct a 
nonnegative eigenvector associated with a dominant real eigenvalue.
Using the relationship mentioned in subsection~\ref{subsec:l1}, we will do this by studying the closely related eigenvalue problem involving $B$.
The vector that is constructed may then be used to define nonbacktracking eigenvector centrality.
To proceed, we would like to invoke the Perron-Frobenius theorem. However, we must address the issue that in certain cases the matrix $B$ is not irreducible. 

From \cite[Prop. 2.3]{GK21} we have the following result.

\begin{theorem}[Irreducibility of $B$ \cite{GK21}]
\label{thm:GK}
    Let $G$ be a connected graph that is not a cycle and $d_{\min} \geq 2$ (i.e. no dangling nodes). Then $B$ is
irreducible.
\end{theorem}

Theorem~\ref{thm:GK} lists generic cases where nonbacktracking centrality is immediately well-defined.  
We must analyze three remaining cases:
\begin{itemize}
    \item The graph is a tree.
    \item The graph is a cycle with some attached dangling nodes (i.e., the $2$-core of the graph is a cycle).
    \item The $2$-core contains at least two cycles, but the graph has dangling nodes.
\end{itemize}

In all these cases, the matrix $B$ is reducible. In fact, let $v$ be a dangling node and let $u$ be its neighbour. Then the row associated with $u \rightarrow v$ in $B$ contains only zeros. Furthermore, if $G$ is a cycle then $B$ 
has the form
\[ B = \begin{bmatrix}
    P & 0 \\
    0 & P
\end{bmatrix},\]
where $P$ is a permutation matrix. Clearly such a matrix is reducible. 


Our aim is therefore to show that a well-defined nonbacktracking eigenvector centrality vector exists in the three cases above.

First, assume that $G$ is a tree or a cycle with possibly dangling nodes attached. Then from 
Theorem~\ref{thm:twocore} and the proof of 
Theorem~\ref{thm:globnbt} we know that the  
block matrix in (\ref{eq:nbteig}) has spectral radius equal to $1$, with the 
eigenvector $\bone$ corresponding to the 
eigenvalue $1$. We may use this eigenvector to 
define nonbacktracking centrality. 

Now assume that $G$ has a $2$-core containing at least two cycles, but $G$ has dangling nodes. We start with the leading eigenvalue and positive eigenvector of the matrix $B$ associated with the $2$-core ($B$ is irreducible in this case and we can appeal to the Perron-Frobenius Theorem). If we add a dangling node $v$ to an existing node $u$, then the matrix $B$ is enlarged by two in each dimension. Using ``old'' and ``new''
to denote the matrices before and after this addition, we have 
\[ B_{\mathrm{new}} = \begin{bmatrix}
    B_{\mathrm{old}} & 0_{2m \times 1} & \br\\
    \br^T &0 &0\\
    0_{2m \times 1}& 0 & 0
\end{bmatrix},\]
where each entry of $\br \in \mathbb{R}^{2m}$ is either zero or one, and the last two rows and columns are associated with $v \rightarrow u$ and $u \rightarrow v$, respectively. We will show that this matrix has a nonnegative eigenvector even though it is reducible.
Consider the vector 
\[ \bx_{\mathrm{new}} = \begin{bmatrix}
    \bx_{\mathrm{old}} \\ a \\ 0
\end{bmatrix} \in \RR^{2m+2},\]
where $\bx_{\mathrm{old}} \in \mathbb{R}^{2m}$ 
is the leading eigenvector of $B_{\mathrm{old}}$,
which 
has nonnegative entries, and $a \in \mathbb{R}$ is to be determined. 
Since $ B \bx_{\mathrm{old}} = \rho(B_{\mathrm{old}}) \bx_{\mathrm{old}}$,
 we have constructed a dominant eigenvector of 
$B_{\mathrm{new}}$ if $a$ satisfies
\[
        \br^T \bx_{\mathrm{old}} = \rho(B_{\mathrm{old}})a,
\]
which shows that a suitable nonnegative $a$ may be found.

We see that after adding a dangling node 
it remains possible to construct a 
nonnegative leading eigenvector associated with a positive real eigenvalue of maximum modulus. In this way, we may add all the required dangling nodes to the $2$-core until we have recovered the original graph, and the 
resulting
nonnegative leading eigenvector can be used to 
define 
nonbacktracking eigenvector centrality.

\section{Computational Experiments}\label{sec:exp}

In this section we report on computational experiments that record the size of the generalized friendship effects. To do this  
we define the \emph{global friendship paradox margin} to be  $(\bd^T \bx)/(\bone^T \bd) - 
(\bone^T \bx)/n$, 
and the 
\emph{local friendship paradox margin} to be 
$\bone^T (D^{-1} A \bx - \bx)/n$. In the later case we
have normalized by $n$ to account for the network size.
From the perspective of social network analysis, it is interesting to compare these margins across different network types and different centrality measures. We may also ask whether
the margin size for one centrality measure always dominates the margin size for another.

We consider four real-world undirected networks and three synthetic random graph models, all sourced from the NetworkX Python library \cite{hagberg2020networkx}.

The \textbf{Karate Club} network, from Zachary's seminal 1977 study \cite{zachary1977information}, represents friendships among 34 members of a university karate club in the United States, which famously split into two factions following a dispute between the instructor and club president. The \textbf{Davis Southern Women} network, collected by Davis et al.\ in the 1930s \cite{davis1941deep}, is a bipartite graph representing the observed attendance of 18 Southern women at 14 social events. The \textbf{Florentine Families} network, from Padgett and Ansell's 1993 study \cite{padgett1993robust}, captures marriage alliances among 15 prominent Florentine families during the Italian Renaissance (circa 1430), a period marked by political rivalry between the Medici and Strozzi factions. The \textbf{Les Mis\'{e}rables} network, compiled by Knuth \cite{knuth1993stanford}, represents co-appearances of 77 characters in Victor Hugo's novel, where edges connect characters who appear in the same scene. 

The three synthetic random graph models are generated with $n=50$ nodes. The \textbf{Erd\H{o}s-R\'{e}nyi},
or Gilbert, 
 model $G(N, p)$ \cite{erdos1959random,Gil59} 
  constructs a graph where each possible edge is included independently with probability $p=0.2$. The \textbf{Newman-Watts-Strogatz} model \cite{NEWMAN1999341} generates small-world networks by starting from a ring lattice where each node is connected to its $k=3$ nearest neighbours on each side, then adding shortcut edges with probability $p=0.1$ for each existing edge.
   Unlike the original Watts-Strogatz model, which rewires edges, this variant preserves all original connections while adding new shortcuts, producing connected graphs that exhibit high clustering (like regular lattices) while maintaining short average path lengths (like random graphs).
The \textbf{Random Geometric} model \cite{penrose2003random} places nodes uniformly at random in the unit square and connects pairs whose Euclidean distance is at most the radius $r=0.3$, creating spatially-embedded networks with inherent locality and clustering properties characteristic of wireless communication networks and other physical systems.

Figure~\ref{fig:real_and_random_graphs_margins}
shows results for degree, Katz, exponential, and nonbacktracking eigenvalue centralities. 
In all computations, we normalize $\bx$ so that 
$\| \bx \|_1 = 1$. 
Results for the synthetic random graphs are averaged over 10 trials to account for variability in the random generation process.

We see that the four centrality measures typically produce margins that are within an order of magnitude of each other, and that the relative ordering of the margins is not consistent across different networks.

\begin{figure}
\centering
\includegraphics[width=1\textwidth]{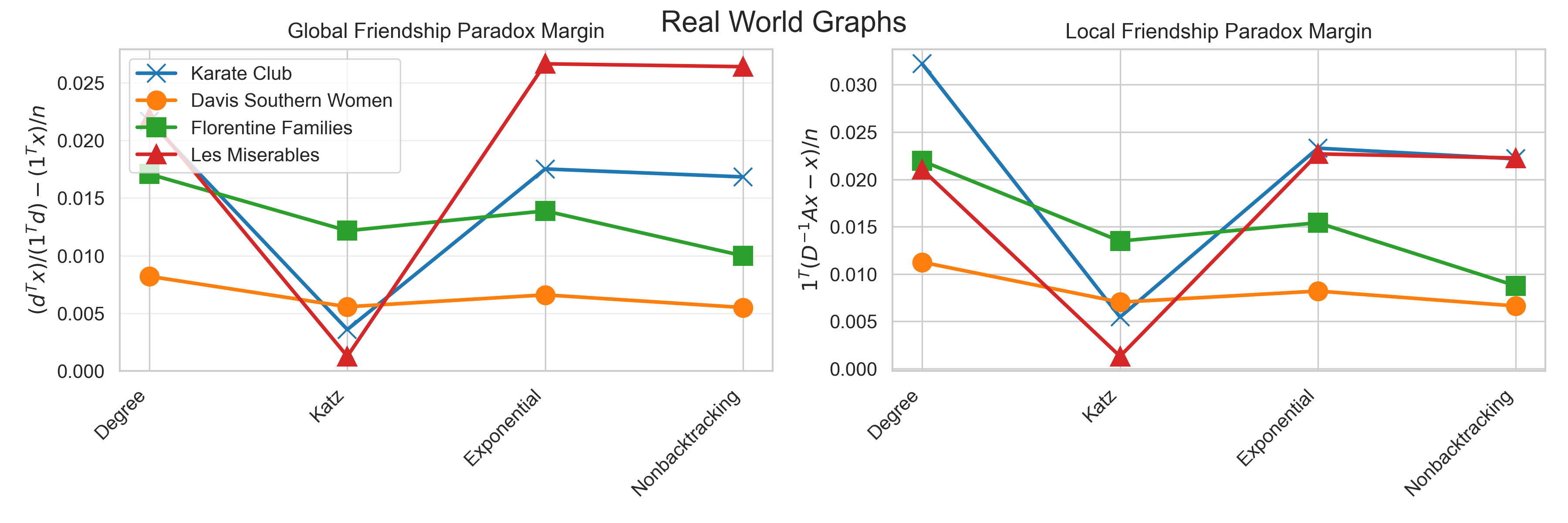}\\
\includegraphics[width=1\textwidth]{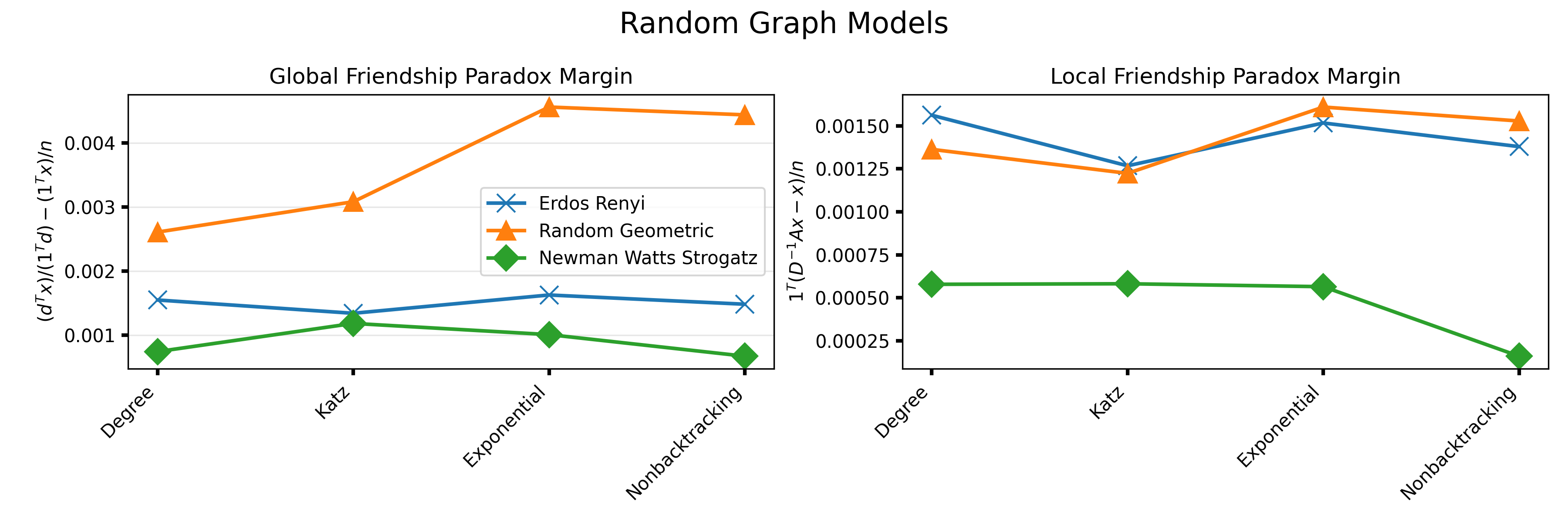}
\caption{Global and local friendship paradox margins for different centrality measures on real-world networks (top) and random graph models (bottom). The global friendship paradox margin is computed as $(\bd^T \bx)/(\bone^T \bd) - (\bone^T \bx)/n$, while the local friendship paradox margin is computed as $\bone^T (D^{-1} A \bx - \bx)/n$. Results for the  random graphs are averaged over 10 trials.}\label{fig:real_and_random_graphs_margins}
\end{figure}

Additional results are  presented in Tables~\ref{tab:global_inequality} and \ref{tab:local_inequality}. The tables show results for larger real-world networks spanning diverse domains. The \textbf{USAir97} air transportation network from \cite{DH11}, with weights binarized, consists of $N=332$ nodes representing US airports connected by 2126 undirected edges indicating whether at least one scheduled USAir flight operated between airport pairs in 1997. The \textbf{Celegans} neural connectivity network contains 277 nodes representing neurons and 2105 edges representing synaptic connections in the nematode \textit{Caenorhabditis elegans}, sourced from \verb5https://www.cs.cornell.edu/~arb/data/5 based on \cite{GD03}. The \textbf{Jazz} collaboration network from \cite{CMK04} comprises 198 jazz bands as nodes connected by 2742 edges representing shared musicians who performed together between 1912 and 1940. The \textbf{Adjnoun} network from \cite{Newman06} represents word adjacency patterns in the novel ``David Copperfield'' by Charles Dickens, where the 112 nodes correspond to the 60 most commonly occurring adjectives and 60 most commonly occurring nouns (excluding 8 disconnected nodes), connected by 425 edges indicating adjacent positions in the text. The \textbf{Football} network from \cite{girvan2002community} represents American football games among 115 Division IA college teams during the fall 2000 regular season, where nodes represent teams and the 613 edges represent scheduled matches. Finally, the \textbf{Journals} readership network from \cite{pajek_data} contains 124 nodes representing magazines and journals connected by 5972 edges indicating shared reader interests, derived from a survey of approximately 100,000 Ljubljana (Slovenia) residents conducted in 1999--2000.

\begin{table}[htbp]
\centering
\caption{Global Friendship Paradox Margin: $(\bd^T \bx)/(\bone^T \bd) - (\bone^T \bx)/n$}
\label{tab:global_inequality}
\begin{tabular}{lcccccc}
\toprule
Centrality & Adjnoun & Football & Jazz & USAir97 & Journals & C.elegans \\
\midrule
Degree & 7.2763e-03 & 5.9727e-05 & 1.9957e-03 & 7.4213e-03 & 4.4923e-04 & 2.6964e-03 \\
Katz & 5.5946e-03 & 8.0152e-05 & 1.7464e-03 & 4.5260e-03 & 3.3526e-04 & 1.8515e-03 \\
Exponential & 6.5420e-03 & 9.2335e-05 & 2.2484e-03 & 6.6854e-03 & 3.8577e-04 & 2.3047e-03 \\
Nonbacktracking & 5.8218e-03 & 8.9102e-05 & 2.2102e-03 & 6.5785e-03 & 3.8286e-04 & 2.1628e-03 \\
\bottomrule
\end{tabular}
\end{table}

\begin{table}[htbp]
\centering
\caption{Local Friendship Paradox Margin: $1^T (D^{-1} A \bx - \bx)/n$}
\label{tab:local_inequality}
\begin{tabular}{lcccccc}
\toprule
Centrality & Adjnoun & Football & Jazz & USAir97 & Journals & C.elegans \\
\midrule
Degree & 8.4392e-03 & 4.9844e-05 & 1.6975e-03 & 8.7216e-03 & 5.4808e-04 & 4.0817e-03 \\
Katz & 6.0686e-03 & 4.9344e-05 & 1.2739e-03 & 4.5896e-03 & 4.0626e-04 & 2.3374e-03 \\
Exponential & 7.0473e-03 & 5.4593e-05 & 1.5481e-03 & 6.5988e-03 & 4.6692e-04 & 2.7848e-03 \\
Nonbacktracking & 6.1646e-03 & 5.0445e-05 & 1.5136e-03 & 6.4232e-03 & 4.6328e-04 & 2.5152e-03 \\
\bottomrule
\end{tabular}
\end{table}

\section{Summary}\label{sec:summary}

Our main aim was to present new results on friendship paradoxes---inequalities that can be shown to hold for all connected, undirected networks. 
To do this, we set out a unified treatment of the global/local and degree/generalized versions of the paradox from a straightforward linear algebra perspective.
The original observation in this field dates back more than thirty years \cite{Feld91}, subsequently initiating many 
empirical experiments and motivating researchers across several disciplines to consider interpretations, repercussions and the potential for 
algorithmic developments. 
It is perhaps surprising, therefore, that rigorous
extensions have been discovered only quite recently
\cite{CKN21,HV25,H19} and that further results are possible.

For convenience, 
in Table~\ref{tab:summary} we summarize the 
current known results.
The table clarifies that some open questions remain.
Is there an analogue of Theorem~\ref{thm:globfA}
for the local paradox?
Can results, or counterexamples be found for subgraph centrality, or other measures of the form $(f(A))_{ii}$?
Equally intriguingly, we know that nonbacktracking eigenvector centrality,
which is covered by 
Theorems ~\ref{thm:globnbt} and 
\ref{thm:locnbt}, 
arises as the limit of Katz-style nonbacktracking walk centrality when the downweighting parameter approaches its upper limit \cite{GHN18}.
Can paradoxes, or counterexamples, be found for nonbacktracking Katz?

\begin{table}[t] 
\caption{Summary of Friendship Paradox Results for Connected, Undirected Networks\label{tab:summary}}
\begin{tabular}{ l | cc }
  \toprule
Attribute   & Global Paradox & Local Paradox\\
\midrule
       Degree & \cite{Feld91} & \cite{CKN21} \\
Eigenvector                & \cite{H19} & \cite{HV25} \\ 
$A^{\ell} \bone $, for $\ell$ odd & Corollary~\ref{cor:fA}   & \cite{HV25} (equality case in Theorem~\ref{thm:even}) \\
$A^{\ell} \bone $, for $\ell$ even & counterexample in Sec.~\ref{subsec:counterggfi}  & \cite{HV25}
(equality case in Theorem~\ref{thm:even})
\\
Katz & Corollary~\ref{cor:fA} & \cite{HV25} 
\\
General $f(A) \bone$, $f$ power series  & Remark~\ref{rem:r3}     & Remark~\ref{remark:r2} \\  
Exponential  & Corollary~\ref{cor:fA} & 
Remark~\ref{remark:r1} (also follows from  \cite{HV25})    \\
General $f(A) \bone$, $f$ increasing & Theorem~\ref{thm:globfA} & not known \\
Loneliness & Theorem~\ref{thm:revgfp} & 
Theorem~\ref{thm:revlfp}\\ 
Nonbacktracking eigenvector & 
Theorem~\ref{thm:globnbt} & Theorem~\ref{thm:locnbt}\\
Geometric mean degree & Theorem~\ref{thm:globgmfp} &  Theorem~\ref{thm:locgmfp}\\
Subgraph: $(\exp(A))_{ii}$ & not known & not known\\
Nonbacktracking walk-based 
 & not known & not known\\
 \bottomrule
\end{tabular}
\end{table}

\section*{Acknowledgement} 
DJH was supported by a Fellowship from the Leverhulme Trust
 and by the Advanced Grant ``Numerical Analysis for Stable AI''
  101198795 from the European Research Council.
FH was supported by a Mathematical Institute Scholarship by Oxford Mathematical Institute. 
FT was partially funded by the PRIN-MUR project MOLE code 2022ZK5ME7 and by PRIN-PNRR project FIN4GEO CUP P2022BNB97.

\bibliographystyle{siam}
\bibliography{fofrefs}

\end{document}